\newtheorem{theorem}{Theorem}[section]
\newtheorem{corollary}{Corollary}[theorem]
\newtheorem{rem}[theorem]{Remark}
\newtheorem{definition}{Definition}[section]
\newcommand{\RomanNumeralCaps}[1]{\MakeUppercase{\romannumeral #1}}
\newcommand{\comment}[1]{}
\begin{document}

\begin{center}
\section*{Multivariate Hawkes-based Models in LOB: European, Spread and Basket Option Pricing}
{\sc Qi Guo}\\Department of Mathematics and Statistics, University of Calgary, 2500 University Drive NW, Calgary, AB T2N 1N4, Canada \\ 
\vspace{0.3cm}
{\sc Anatoliy Swishchuk}\\Department of Mathematics and Statistics, University of Calgary, 2500 University Drive NW, Calgary, AB T2N 1N4, Canada\\ 
\vspace{0.3cm}
{\sc Bruno R{\'e}millard}\\Department of Decision Sciences, HEC Montr\'{e}al, 3000 Chemin de la Cote-Sainte-Catherine, Montr\'{e}al, QC  H3T 2A7,
Canada\\

\end{center}

\hspace{1cm}

{\bf Abstract:} In this paper, we consider pricing of European options and spread options for Hawkes-based model for the limit order book. We introduce multivariate Hawkes process and the multivariable general compound Hawkes process. Exponential multivariate general compound Hawkes processes and limit theorems for them, namely, LLN and FCLT, are considered then. We also consider a special case of one-dimensional EMGCHP and its limit theorems. Option pricing with $1D$ EGCHP in LOB, hedging strategies, and numerical example are presented. We also introduce greeks calculations for those models. Margrabe's spread options valuations with Hawkes-based models for two assets and numerical example are presented. Also, Margrabe's spread option pricing with two $2D$ EMGCHP and numerical example are included. Basket options valuations with numerical example are included. We finally discuss the implied volatility and implied order flow. It reveals the relationship between stock volatility and the order flow in the limit order book system. In this way, the Hawkes-based model can provide more market forecast information than the classical Black-Scholes model.

\vspace{0.5cm}

{\bf Keywords}: Multivariate general compound Hawkes process (MGCHP); exponential MGCHP (EMGCHP); LLN and FCLT; limit order book; option pricing; Margrabe's spread option pricing.

\section{Introduction}

Pricing options for the limit order book  (LOB) was initiated very recently. For example, paper \cite{SR} prices European options in a discrete time model for the LOB, and hence builds a discrete time model for the structure of the limit order book, so that the price per share depends on the size of the transaction. 

Some authors used the idea to apply a binomial model for the LOB. In \cite{GS1}, they study a binomial version of the LOB model from \cite{CJP}. The dynamic programming was used to solve the problem of super-hedging
for derivatives depending on the spot price, where the marked to market value for
the value of the portfolio was taken and a linear supply curve was supposed. The limit in
continuous time for a similar model in \cite{GS2} was studied as well. 
In \cite{AG}, the authors study the problem of hedging on a model based on \cite{OW}, where the cost of illiquidity is given by a linear density. They also use the marked to
market value to define the portfolio. In \cite{DS}, 
the limit in continuous time of a binomial model for the LOB was studied. They also took
the marked to market value for the portfolio and derivatives as functions of the spot price.
Applications on a binomial model for illiquidity, was considered in \cite{SR}, where they  maximized utility of wealth using primal/dual optimization when illiquidity
is modelled with a non-linear supply curve. For orders
modelled by using Poisson processes, we refer to \cite{BL} or
\cite{FKK} and references therein.

In this paper, we consider pricing of European options and spread options for Hawkes-based model for the limit order book. Multivariate general compound Hawkes processes and their applications in limit order books were considered in \cite{Guo2020}. Multivariate General Compound Point Processes in Limit Order Books were considered in \cite{GRS}. The paper \cite{SR} studied pricing European options in a discrete time model for the limit order book. One dimensional exponential general compound Hawkes process was first introduced in \cite{S}. A level-1 limit order book with time
dependent arrival rates was considered in \cite{CERS}. Exponential multivariate general compound Hawkes processes and limit theorems for them, such as LLN and FCLT, were introduced in \cite{G}.\\
The rest of the paper is organized as follows. Section 2 introduces multivariate Hawkes process and the multivariable general compound Hawkes process. The exponential multivariate general compound Hawkes processes and limit theorems for them, namely, LLN and FCLT, are considered in Section 3. Here we also consider a special case of one-dimensional EMGCHP and its limit theorems. Option pricing with $1D$ EGCHP in LOB and numerical example are presented in Section 4. Here, we also present greeks calculations. Margrabe's spread options valuations with Hawkes-based models for two assets and numerical example are presented in Section 5. Margrabe's spread option pricing with two $2D$ EMGCHP and numerical example are presented in Section 6. Section 7 concludes the paper.

\section{Multivariate Hawkes Process and the Multivariate General Compound Hawkes Process}

The one-dimensional general compound Hawkes process is defined as
\[
S_t = S_0 + \sum_{k=1}^{N(t)} a (X_k),
\]
where $N(t)$ is an one-dimensional Hawkes process, $X_n$ is an ergodic continuous-time finite state (or countable state) Markov chain, independent of $N_t$, with space state $X$, and $a(x)$ is a bounded continuous function on $X$. 

Let $ \Vec{N}_t = (N_{1,t}, N_{2,t}, \cdots, N_{d,t},)$ be a $d$-dimensional Hawkes process with intensity function for each element $N_{i,t}$  
\begin{equation}
	\lambda_i(t) = \lambda_{\infty,i} + \int_{(0,t)} \sum_{j=1}^d \mu_{ij}(t-s) d N_{j,s},
\end{equation}
where $\lambda_{\infty,i} \in \mathbb{R}_+ $ and the intensity $\mu_{ij}(t)$ is a function from $\mathbb{R}_+$ to $\mathbb{R}_+$. Let $\boldsymbol{\mu} = (\mu_{ij})_{1 \leq i,j \leq d}$ and Let $\mathbf{K} = \int_0^{\infty} \boldsymbol{\mu} (t) dt$.

The MGCHP $\vec{S}_t = (S_{1,t}, S_{2,t}, \cdots, S_{d,t},)$ is defined as
\begin{equation}\label{MCHP}
	S_{i,t} = S_{i,0} + \sum_{k=1}^{N_{i,t}} a(X_{i,k}),
\end{equation}
where $X_{i,k}$ are independent ergodic continuous-time Markov chains.
\section{The Exponential Multivariate General Compound Hawkes Process model and Limit Theorems}
We introduce a new type of MGCHP, namely the exponential MGCHP (EMGCHP). Recall the definition of MGCHP $\vec{S}_t = (S_{1,t}, S_{2,t}, \cdots, S_{d,t})$
\begin{equation}\label{MGCHP_ch5}
	S_{i,t} =  S_{i,0} + \sum_{k=1}^{N_{i,t}} a(X_{i,k}),
\end{equation}
we can find the price process $\vec{S}_t$ is not always greater than $0$. In the numerical simulation with real trading data, the initial price $\vec{S}_0$ is much greater than the jump size $a(X_{i,k})$. So, we didn't have the non-positive stock prices in previous numerical examples. However, we need to fix this problem mathematically when we consider financial applications such as portfolio management problems and option pricing problems. Here, we introduce the exponential MGCHP (EMGCHP): 
\begin{definition}
	Let $ \Vec{N}_t = (N_{1,t}, N_{2,t}, \cdots, N_{d,t},)$ be a $d$-dimensional Hawkes process and $X_{i,k}$ are independent ergodic continuous-time Markov chains. We call $ \Vec{A}_t = e^{\Vec{S}_t}= (A_{1,t}, A_{2,t}, \cdots, A_{d,t},)$ is an exponential multivariate general compound Hawkes process (EMGCHP) if
	\begin{equation}\label{EMGCHP}
		A_{i,t}  =   \exp \left(S_{i,0} + \sum_{k=1}^{N_{i,t}} a(X_{i,k})\right).
	\end{equation}
	where $\vec{S}_t = (S_{1,t}, S_{2,t}, \cdots, S_{d,t},)$ is a MGCHP. And $A_{i,0} = \exp \left( S_{i,0} \right)$ is the initial price at time $t=0$ for the $i$th stock. 
\end{definition}

\begin{rem}
	In the MGCHP model, the Hawkes process $\vec{N}_t$ denotes the order flow at time $t$ and the Markov chain $a(X_{i,k})$ represents the change size for the stock price when an order arrives. So, the price changes by adding $a(X_{i,k})$. As for the EMGCHP model, the price changes by multiplying a factor $e^{a(X_{i,k})}$ when an order arrives. 	
\end{rem}

%%%%%%%%%%%%%%%%%
%%%%%%%%%%%%%%%%%%%%%%%%%
%%%%%%%%%%%%%%%%%%%%%%%%%%%%%%%%%%%%%%
%%%%%%%%%%%%%%%%%%%%%%%%%%%%%%%%%%%%%%%%%

%%%%%%%%%%%%NEW Sec. 3.1

We focus on the law of large numbers and the functional central limit theorems for the EMGCHP in this section. Let $\circ$ denote the element-wise product, namely for matrices (or vector) $A$ and $B$, $(A\circ B)_{ij} = A_{ij}B_{ij}$. And let $\exp [\cdot]$ denote the element-wise exponential function, namely for matrix (or vector) $A$, $(\exp [A])_{ij} = \exp(A_{ij})$. Then we have the following limit theorems.

\begin{theorem}: \textbf{FCLT \RomanNumeralCaps{1} for EMGCHP: Stochastic Centralization} \label{FCLT_EMGCHP}
	Let $X_{i,k}, \, i=1,2,\cdots,d$ be independent ergodic Markov chains with $n$ states $\{1,2,\cdots,n\}$ and with ergodic probabilities $\left(\pi_{i,1}^{*}, \pi_{i,2}^{*}, \ldots, \pi_{i,n}^{*}\right)$. Let $\vec{A}_t$ be $d$-dimensional EMGCHP, we have 
	\begin{equation} \label{ETFCLT}
		 \left(\vec{A}_{nt}\right)^{\frac{1}{\sqrt{n}}} \circ \exp{ \left[- \frac{1}{\sqrt{n}} \Tilde{a}^* \vec{N}_{nt}\right]}  \longrightarrow \exp{ \left[\Tilde{\sigma}^* \boldsymbol{\Sigma^{1/2}} \vec{W}(t)\right] }, \, for\, all\, t>0 
	\end{equation}   
	as $n \rightarrow \infty$, where $\vec{W}(t)$ is a standard $d$-dimensional Brownian motion, $\mathbf{\Sigma}$ is a diagonal matrix such that $\mathbf{\Sigma}_{ii} = ((\mathbf{I}-\mathbf{K})^{-1}\vec{\lambda}_\infty)_i$, $\vec{N}_{nt}$ is a $d$-dimensional vector, $\Tilde{a}^*$ and $\Tilde{\sigma}^*$ are diagonal matrices:
	\[
	\Tilde{a}^* =
	\begin{bmatrix}
		a^*_1 & \cdots & 0\\
		\vdots& \ddots & \vdots\\
		0 & \cdots & a^*_d
	\end{bmatrix}, \,   \vec{N}_{nt} =
	\begin{bmatrix}
		N_{1,nt}  \\
		\vdots \\
		N_{d,nt}
	\end{bmatrix}, \,\Tilde{\sigma}^* =
	\begin{bmatrix}
		\sigma^*_1 &\cdots & 0 \\
		\vdots & \ddots & \vdots \\
		0 & \cdots & \sigma^*_d
	\end{bmatrix}.
	\]
	
	Here, $a_i^* = \sum_{k \in X_i} \pi_{i,k}^{*} a\left(X_{i,k}\right)$, and $\left(\sigma^{*}_i\right)^{2} :=\sum_{k \in X_i} \pi_{i,k}^{*} v_i(k)$ with
	\[
	\begin{aligned} v_i(k) &=b_i(k)^{2}+\sum_{j \in X_i}(g_i(j)-g_i(k))^{2} P_i(k, j) - 2 b_i(k) \sum_{j \in X_i}(g_i(j)-g_i(k)) P_i(k, j) \\ b_i &=(b_i(1), b_i(2), \ldots, b_i(n))^{\prime} \\ b_i(k) : &=a_i(k)-a_i^{*} \\ g_i : &=\left(P_i+\Pi_i^{*}-I\right)^{-1} b_i, \end{aligned}
	\]
	where $P_i$ is the  transition probability matrix for the Markov chain $X_i$, $\Pi_i^*$ is the matrix of stationary distributions of $P_i$,
	and $g_i(j)$ is the $j$th entry of $g_i$.	
\end{theorem}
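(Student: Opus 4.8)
The plan is to reduce \eqref{ETFCLT} to the law of large numbers and the functional central limit theorem for the ordinary (non-exponential) MGCHP, and then apply the continuous mapping theorem. First I would take logarithms coordinate by coordinate: by \eqref{EMGCHP}, $A_{i,nt}=\exp\bigl(S_{i,0}+\sum_{k=1}^{N_{i,nt}}a(X_{i,k})\bigr)$, so the $i$th entry of the left-hand side of \eqref{ETFCLT} is
\[
\bigl(A_{i,nt}\bigr)^{1/\sqrt n}\,\exp\!\Bigl(-\tfrac{1}{\sqrt n}\,a_i^{*}N_{i,nt}\Bigr)
=\exp\!\left(\frac{S_{i,0}}{\sqrt n}+\frac{1}{\sqrt n}\sum_{k=1}^{N_{i,nt}}\bigl(a(X_{i,k})-a_i^{*}\bigr)\right).
\]
Writing $Z_{n,i}(t):=\tfrac{1}{\sqrt n}\sum_{k=1}^{N_{i,nt}}\bigl(a(X_{i,k})-a_i^{*}\bigr)$ and $\vec Z_n(t)=(Z_{n,1}(t),\dots,Z_{n,d}(t))$, the whole vector equals $\exp\!\bigl[\tfrac{1}{\sqrt n}\vec S_0+\vec Z_n(t)\bigr]$. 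Since $\vec S_0/\sqrt n\to 0$ and $x\mapsto\exp[x]$ is continuous, by Slutsky's theorem and the continuous mapping theorem it is enough to show $\vec Z_n(t)\Longrightarrow \tilde\sigma^{*}\mathbf\Sigma^{1/2}\vec W(t)$ for each fixed $t>0$; note that, both factors being diagonal, the target has $i$th coordinate $\sigma_i^{*}\sqrt{\mathbf\Sigma_{ii}}\,W_i(t)$.

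This convergence is precisely the ``stochastic centralization'' FCLT for the MGCHP, and I would build it from three ingredients. (i) For a single coordinate, write $Z_{n,i}(t)=\sqrt{N_{i,nt}/n}\cdot\bigl(N_{i,nt}\bigr)^{-1/2}\sum_{k=1}^{N_{i,nt}}\bigl(a(X_{i,k})-a_i^{*}\bigr)$, and invoke the functional central limit theorem for centered additive functionals of an ergodic finite-state Markov chain (martingale approximation via the Poisson equation $g_i=(P_i+\Pi_i^{*}-I)^{-1}b_i$ with $b_i(k)=a_i(k)-a_i^{*}$): this yields $m^{-1/2}\sum_{k=1}^{\lfloor ms\rfloor}\bigl(a(X_{i,k})-a_i^{*}\bigr)\Rightarrow\sigma_i^{*}B_i(s)$ in the Skorokhod space, with $(\sigma_i^{*})^{2}=\sum_{k}\pi_{i,k}^{*}v_i(k)$ exactly the asymptotic variance stated in the theorem. (ii) The law of large numbers for the multivariate Hawkes process --- valid under the stability assumption that the spectral radius of $\mathbf K=\int_0^{\infty}\boldsymbol\mu(t)\,dt$ is $<1$ --- gives $N_{i,nt}/n\to t\,\bigl((\mathbf I-\mathbf K)^{-1}\vec{\lambda}_\infty\bigr)_i=t\,\mathbf\Sigma_{ii}$ in probability, uniformly on compact time sets. (iii) Because $X_1,\dots,X_d$ are mutually independent and independent of $\vec N$, the one-dimensional limits in (i) hold jointly with independent Brownian motions $B_1,\dots,B_d$, and these are independent of the (deterministic) limits in (ii).

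Combining the three ingredients is an Anscombe/random-time-change step: composing the inner Markov-chain CLT of (i) with the deterministic limit $N_{i,nt}/n\to t\,\mathbf\Sigma_{ii}$ of (ii) gives $Z_{n,i}(t)\Rightarrow\sigma_i^{*}\sqrt{\mathbf\Sigma_{ii}}\,B_i(t)$, and, by the diagonality of $\tilde\sigma^{*},\mathbf\Sigma^{1/2}$ together with the independence of the $B_i$, the vector statement $\vec Z_n(t)\Rightarrow\tilde\sigma^{*}\mathbf\Sigma^{1/2}\vec W(t)$ follows with $\vec W=(B_1,\dots,B_d)$ a standard $d$-dimensional Brownian motion. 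Feeding this into $\exp[\cdot]$ and recalling $\vec S_0/\sqrt n\to 0$ closes the argument. The hard part will be making the random time change in step (iii) rigorous --- carefully composing the functional CLT for the Markov-chain partial sums with the random (but asymptotically deterministic) Hawkes counting process $N_{i,n\cdot}$, and propagating the coordinatewise independence through to the limit; by contrast, the Markov-chain CLT of (i) and the multivariate Hawkes LLN of (ii) can be quoted essentially verbatim from \cite{S,Guo2020,G}.
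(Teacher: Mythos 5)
Your proof is correct and takes essentially the same route as the paper: the paper simply quotes the FCLT~I for the (non-exponential) MGCHP from \cite{Guo2020}, i.e.\ $\frac{\vec S_{nt}-\Tilde a^{*}\vec N_{nt}}{\sqrt n}\to\Tilde\sigma^{*}\boldsymbol{\Sigma}^{1/2}\vec W(t)$, and concludes by continuity of the exponential --- exactly your reduction via $\vec Z_n(t)$ (with $\vec S_0/\sqrt n\to 0$ absorbed). The only difference is that you additionally sketch the internal proof of the cited MGCHP limit theorem (Markov-chain CLT, Hawkes LLN, Anscombe time change), which the paper treats as a black box.
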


\begin{proof}
	From the FCLT 1 for the MGCHP \cite{Guo2020}, we have 
	\begin{equation} 
		\frac{\vec{S}_{nt} - \Tilde{a}^* \vec{N}_{nt}}{\sqrt{n}} \xrightarrow{n \rightarrow \infty} \Tilde{\sigma}^* \boldsymbol{\Sigma^{1/2}} \vec{W}(t), \, for\, all\, t>0.
	\end{equation} 
	In this way for $i$th element in the vector $\vec{S}_{nt}$, we have
		\begin{equation} 
		\frac{{S}_{i,nt} - {a}_i^* {N}_{i,nt}}{\sqrt{n}} \xrightarrow{n \rightarrow \infty} {\sigma}_i^* \boldsymbol{(\Sigma)_{ii}^{1/2}} \vec{W}_i(t), \, for\, all\, t>0.
	\end{equation}

	Since exponential function is continues, we can derive
	\begin{equation} \label{elimit}
		\exp{  \left(   \frac{{S}_{i,nt} - {a}_i^* {N}_{i,nt}}{\sqrt{n}}  \right)   }
		 \xrightarrow{n \rightarrow \infty} \exp{\left({\sigma}_i^* \boldsymbol{(\Sigma)_{ii}^{1/2}} \vec{W}_i(t)\right)}   , \, for\, all\, t>0.
	\end{equation}  
	Rewriting Equation (\ref{elimit}) into a matrix form, we finish the proof.	
\end{proof}

\begin{theorem}\textbf{LLN for EMGCHP}\label{LLNEMGCHP}
	Let $\vec{S}_{nt}$ be a multivariate general compound Hawkes process defined before, we have
	\begin{equation} \label{llnemgchp}
		{  \left(\vec{A}_{nt} \right)^{\frac{1}{n}}        }	 \longrightarrow \exp{\left[ \Tilde{a}^* \boldsymbol{\Sigma} \vec{t} \right]}  
	\end{equation}	
	in the law of Skorohod topology as $n \rightarrow +\infty$, where $\Tilde{a}^*$ and $\boldsymbol{\Sigma}$ are defined before. $\vec{t}$ is a vector of time $t$.
\end{theorem}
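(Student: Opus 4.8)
The plan is to derive (\ref{llnemgchp}) directly from the law of large numbers for the MGCHP proved in \cite{Guo2020}, composed with the element-wise exponential map via the continuous mapping theorem. First I would recall the LLN for the MGCHP: for each coordinate $i$ one has $S_{i,nt}/n \to a_i^* \,\Sigma_{ii}\, t$ as $n\to\infty$ in the Skorohod topology, where $\Sigma_{ii} = ((\mathbf{I}-\mathbf{K})^{-1}\vec{\lambda}_\infty)_i$ is the asymptotic intensity coming from the LLN for the $i$-th Hawkes component, $N_{i,nt}/n \to \Sigma_{ii}\, t$. Stacking coordinates, this reads
\begin{equation*}
\frac{\vec{S}_{nt}}{n} \xrightarrow{n\to\infty} \tilde{a}^* \boldsymbol{\Sigma}\, \vec{t}
\end{equation*}
in $D([0,\infty),\mathbb{R}^d)$, the limit being the deterministic, continuous (indeed linear) path $t \mapsto \tilde{a}^* \boldsymbol{\Sigma}\, \vec{t}$.

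Next I would use the definition $A_{i,t} = \exp(S_{i,t})$ to rewrite the left-hand side of (\ref{llnemgchp}) as $\left(\vec{A}_{nt}\right)^{1/n} = \exp\!\left[\vec{S}_{nt}/n\right]$, the element-wise exponential of $\vec{S}_{nt}/n$. The map $\Phi\colon D([0,\infty),\mathbb{R}^d)\to D([0,\infty),\mathbb{R}^d)$, $(\Phi x)(t) = \exp[x(t)]$, is continuous at every path that is itself continuous (in the $J_1$ Skorohod topology); since the limiting path here is continuous, the continuous mapping theorem yields
\begin{equation*}
\left(\vec{A}_{nt}\right)^{1/n} = \exp\!\left[\frac{\vec{S}_{nt}}{n}\right] \xrightarrow{n\to\infty} \exp\!\left[\tilde{a}^* \boldsymbol{\Sigma}\, \vec{t}\right]
\end{equation*}
in law in the Skorohod topology, which is exactly (\ref{llnemgchp}). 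Everything beyond this is routine bookkeeping with the block-diagonal matrices $\tilde{a}^*$ and $\boldsymbol{\Sigma}$.

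The only genuine subtlety — and the step I would spell out carefully — is the Skorohod-space continuous mapping argument: since the exponential is not globally Lipschitz one restricts to a compact time interval $[0,T]$ and uses that $\vec{S}_{nt}/n$ eventually stays in a fixed compact set with probability tending to one (its limit being deterministic), so that on that set $\exp[\cdot]$ is Lipschitz and the composition operator is continuous at the limit path. One should also make sure the quoted MGCHP LLN is invoked as a functional statement in Skorohod topology rather than merely as fixed-$t$ convergence; if only the latter were available, the conclusion of this theorem would be correspondingly weakened to convergence of finite-dimensional laws.
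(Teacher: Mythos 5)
Your proposal is correct and takes essentially the same route as the paper: invoke the LLN for the MGCHP from \cite{Guo2020} and compose with the element-wise exponential, using $\left(\vec{A}_{nt}\right)^{1/n} = \exp\left[\vec{S}_{nt}/n\right]$. The paper simply ``takes the exponential of both sides'' coordinate-wise and rewrites in matrix form; your additional care about continuity of the composition map on Skorohod space at the (deterministic, continuous) limit path is rigor the paper leaves implicit.
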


\begin{proof}
	Recalling the LLN for MGCHP \cite{Guo2020}
	\begin{equation} \label{llnemgchp1}
		\frac{\vec{S}_{nt}}{n}  \xrightarrow{n \rightarrow \infty} \Tilde{a}^* \boldsymbol{\Sigma} \vec{t}.
	\end{equation}	
	Then, for each element in $\vec{S}_{nt}$, the LLN can be written as
	\begin{equation} 
		\frac{{S}_{i,nt}}{n} \xrightarrow{n \rightarrow \infty} {a}_i^* \boldsymbol{\Sigma}_{ii} t.
	\end{equation}.	
	Taking the exponential function for both sides, we 
	have
	\begin{equation} \label{llnemgchp2}
		\exp\left(\frac{{S}_{i,nt}}{n}\right) = { \left({A}_{i,nt}\right)^{\frac{1}{n}}}  \xrightarrow{n \rightarrow \infty} \exp { \left( {a}_i^* \boldsymbol{\Sigma}_{ii} t \right) }.
	\end{equation}.	
Rewriting (\ref{llnemgchp2}) into the matrix form we can derive the LLN for the EMGCHP.
\end{proof}

Similar as the MGCHP, we also consider a special case for the EMGCHP, namely the two-sate EMGCHP. Let the mid-price process  $ \Vec{A}_t = (A_{1,t}, A_{2,t}, \cdots, A_{d,t},)$ be
\begin{equation}\label{2dECHP}
	A_{i,t} = A_{i,0} \exp \left( \sum_{k=1}^{N_{i,t}} X_{i,k} \right)  ,
\end{equation}
where $X_{i,k}$ are independent ergodic Markov chains with two states $(+\delta, -\delta)$ and ergodic probabilities $(\pi_i^*, 1-\pi_i^*)$. Here, $\delta$ denotes the fixed tick size.

\begin{corollary}\textbf{FCLT \RomanNumeralCaps{1} for two-state EMGCHP: Stochastic Centralization} \label{cffcltemgchp}
	\begin{equation}
		 \left( \vec{A}_{nt}\right)^{\frac{1}{\sqrt{n}}}  \circ \exp {\left[ - \frac{1}{\sqrt{n}} \Tilde{a}^* \vec{N}_{nt} \right]}  \longrightarrow \exp \left[\Tilde{\sigma}^* \boldsymbol{\Sigma}^{1/2} \vec{W}(t)\right], \, for\, all\, t>0 
	\end{equation}   
	as $n \rightarrow \infty$, where $\vec{W}(t)$ is a standard $d$-dimensional Brownian motion, $\vec{N}_{nt}$ is a $d$-dimensional vector, $\Tilde{a}^*$ and $\Tilde{\sigma}^*$ are diagonal matrices
	\[
	\Tilde{a}^* =
	\begin{bmatrix}
		a^*_1 & \cdots & 0\\
		\vdots& \ddots & \vdots\\
		0 & \cdots & a^*_d
	\end{bmatrix}, \,   \vec{N}_{nt} =
	\begin{bmatrix}
		N_{1,nt}  \\
		\vdots \\
		N_{d,nt}
	\end{bmatrix}, \,\Tilde{\sigma}^* =
	\begin{bmatrix}
		\sigma^*_1 &\cdots & 0 \\
		\vdots & \ddots & \vdots \\
		0 & \cdots & \sigma^*_d
	\end{bmatrix}.
	\]
	Here $a_i^*:= \delta (2 \pi_i^* - 1)$, and let $(p_i, p_i')$ be transition probabilities  
	\begin{equation}\label{sigmastar_echp}
		{\sigma_i^*}^2:= 4\delta^2 \bigg( \frac{1 - p' + \pi^*(p' - p)}{ (p + p' -2)^2} - \pi^*(1 - \pi^*) \bigg).
	\end{equation}
\end{corollary}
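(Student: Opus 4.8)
The plan is to obtain this corollary as a direct specialization of Theorem \ref{FCLT_EMGCHP} to the two–state setting, so that the only genuine work is the explicit evaluation of the constants $a_i^*$ and $(\sigma_i^*)^2$ that appear there. First I would observe that the process defined in (\ref{2dECHP}) is precisely the EMGCHP of the Definition in which the jump function $a$ is the identity on the two–point state space, i.e.\ $a_i(1)=+\delta$ and $a_i(2)=-\delta$, with ergodic probabilities $(\pi_i^*,1-\pi_i^*)$. Hence Theorem \ref{FCLT_EMGCHP} applies verbatim and already gives the stated convergence $\left(\vec A_{nt}\right)^{1/\sqrt n}\circ\exp\left[-\tfrac{1}{\sqrt n}\Tilde a^*\vec N_{nt}\right]\longrightarrow \exp\left[\Tilde\sigma^*\bSigma^{1/2}\vec W(t)\right]$ with $\Tilde a^*$, $\Tilde\sigma^*$, $\bSigma$ the same diagonal objects; it remains only to identify $a_i^*$ and $\sigma_i^*$ in closed form.

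The computation of $a_i^*$ is immediate: $a_i^* = \sum_k \pi_{i,k}^* a_i(k) = \delta\pi_i^* - \delta(1-\pi_i^*) = \delta(2\pi_i^*-1)$, which is the claimed value. For $(\sigma_i^*)^2 = \sum_k \pi_{i,k}^* v_i(k)$ I would proceed as in Theorem \ref{FCLT_EMGCHP}: write the $2\times2$ transition matrix $P_i$ of $X_i$ in terms of $(p_i,p_i')$ (with the labelling that makes $p_i,p_i'$ the two ``stay'' probabilities, so the off–diagonal entries are $1-p_i$ and $1-p_i'$), form the centred vector $b_i=(b_i(1),b_i(2))'=(2\delta(1-\pi_i^*),\,-2\delta\pi_i^*)'$, invert the $2\times2$ matrix $P_i+\Pi_i^*-I$ (nonsingular by ergodicity, since $p_i+p_i'-2\neq 0$), solve $g_i=(P_i+\Pi_i^*-I)^{-1}b_i$, substitute into $v_i(k)=b_i(k)^2+\sum_j(g_i(j)-g_i(k))^2P_i(k,j)-2b_i(k)\sum_j(g_i(j)-g_i(k))P_i(k,j)$, and finally take the $\pi_i^*$–weighted average of $v_i(1),v_i(2)$. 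Collecting terms should collapse everything to the single expression (\ref{sigmastar_echp}).

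I expect this last algebraic reduction to be the main obstacle: the inversion of $P_i+\Pi_i^*-I$ together with the cancellation of the several cross terms in $v_i(k)$ is routine but error-prone, and one must be careful to fix the convention relating $(p_i,p_i')$ to the entries of $P_i$ so that the denominator $(p_i+p_i'-2)^2$ comes out correctly. (Alternatively, this two–state variance formula is exactly the one already derived for the one–dimensional two–state general compound Hawkes process in \cite{S,Guo2020}, so one may instead cite that computation componentwise.) Once $a_i^*=\delta(2\pi_i^*-1)$ and $(\sigma_i^*)^2$ as in (\ref{sigmastar_echp}) are in hand, plugging them into the conclusion of Theorem \ref{FCLT_EMGCHP} finishes the proof, exactly as (\ref{elimit}) was assembled into matrix form there.
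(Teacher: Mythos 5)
Your proposal matches the paper's proof exactly: the paper likewise disposes of this corollary in one line, noting that it follows from Theorem \ref{FCLT_EMGCHP} by taking the Markov chain with two states $(+\delta,-\delta)$ and $a(x)=x$. Your explicit verification that $a_i^*=\delta(2\pi_i^*-1)$ and your outline of the reduction of $(\sigma_i^*)^2$ to (\ref{sigmastar_echp}) simply spell out the algebra the paper leaves implicit.
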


\begin{corollary} \textbf{LLN for two-state EMGCHP}	\label{cfllnemgchp}
	
	Let $\vec{A}_{nt}$ be a EMGCHP defined before, we have
	\begin{equation} \label{lln2_ECHP}
		{ \left( \vec{A}_{nt} \right)^{\frac{1}{n}}}	 \longrightarrow \exp { \left[  \Tilde{a}^* \boldsymbol{\Sigma} \vec{t} \right]  } 
	\end{equation}	
	in the law of Skorohod topology as $n \rightarrow +\infty$, where $\Tilde{a}^*$ and $\boldsymbol{\Sigma}$ are defined in (\ref{sigmastar_echp}).
\end{corollary}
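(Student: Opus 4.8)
The plan is to obtain this statement as the two-state specialization of Theorem \ref{LLNEMGCHP}, following line by line the argument used there. First I would record the two-state values of the constants: since here each $X_{i,k}$ takes the two values $\pm\delta$ with ergodic probabilities $(\pi_i^*,1-\pi_i^*)$, we have
\[
a_i^* = \sum_{k\in X_i}\pi_{i,k}^* a(X_{i,k}) = \delta\,\pi_i^* - \delta\,(1-\pi_i^*) = \delta(2\pi_i^*-1),
\]
which is exactly the $\Tilde{a}^*$ entry quoted in the statement, while $\boldsymbol{\Sigma}$ remains the diagonal matrix with $\boldsymbol{\Sigma}_{ii} = \big((\mathbf{I}-\mathbf{K})^{-1}\vec{\lambda}_\infty\big)_i$ exactly as in Theorem \ref{FCLT_EMGCHP}. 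I would also note that, because $\vec{A}_{nt} = \exp[\vec{S}_{nt}]$ componentwise, one has $(\vec{A}_{nt})^{1/n} = \exp[\vec{S}_{nt}/n]$, so that the assertion (\ref{lln2_ECHP}) is precisely the statement that $\exp[\vec{S}_{nt}/n]$ converges to $\exp[\Tilde{a}^*\boldsymbol{\Sigma}\vec{t}\,]$.

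Second, I would invoke the law of large numbers for the two-state MGCHP, which is the corresponding special case of the MGCHP LLN from \cite{Guo2020} used in the proof of Theorem \ref{LLNEMGCHP}: it gives $\vec{S}_{nt}/n \to \Tilde{a}^*\boldsymbol{\Sigma}\vec{t}$ in the Skorohod topology, and hence, reading off the $i$th coordinate, $S_{i,nt}/n \to a_i^*\boldsymbol{\Sigma}_{ii}\,t$. Third, I would apply the continuous mapping theorem with the exponential map: composition with the continuous function $x\mapsto e^{x}$ is a continuous operation on $D([0,\infty),\R)$ with the $J_1$ topology, and since the limiting path $t\mapsto a_i^*\boldsymbol{\Sigma}_{ii}\,t$ is deterministic and continuous, convergence to it in the Skorohod topology is in fact locally uniform convergence, which $\exp$ trivially preserves. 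This yields $\exp(S_{i,nt}/n) = (A_{i,nt})^{1/n} \to \exp(a_i^*\boldsymbol{\Sigma}_{ii}\,t)$ for each $i=1,\dots,d$; collecting the $d$ coordinates into vector/matrix form gives (\ref{lln2_ECHP}).

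There is no genuinely hard step here: the result is a direct corollary of Theorem \ref{LLNEMGCHP} together with the two-state MGCHP LLN, in complete parallel with how Corollary \ref{cffcltemgchp} follows from Theorem \ref{FCLT_EMGCHP}. The only points needing a little care are (i) verifying that the two-state constants $a_i^* = \delta(2\pi_i^*-1)$ (and, for completeness, ${\sigma_i^*}^2$ as in (\ref{sigmastar_echp}), although $\sigma_i^*$ does not enter the LLN) are genuinely the specializations of the general formulas of Theorem \ref{FCLT_EMGCHP}, and (ii) phrasing the continuous-mapping argument for the Skorohod topology rather than merely pointwise in $t$, which is immediate because the limit is a continuous (indeed linear) function of $t$.
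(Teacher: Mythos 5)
Your proposal is correct and follows essentially the same route as the paper, which simply derives Corollaries \ref{cffcltemgchp} and \ref{cfllnemgchp} from Theorems \ref{FCLT_EMGCHP} and \ref{LLNEMGCHP} by specializing to the two-state chain $(+\delta,-\delta)$ with $a(x)=x$. Your added details --- the explicit computation $a_i^*=\delta(2\pi_i^*-1)$ and the continuous-mapping/locally-uniform-convergence justification for applying $\exp$ in the Skorohod topology --- are exactly the steps the paper leaves implicit.
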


\begin{proof}
	Corollary \ref{cffcltemgchp} and \ref{cfllnemgchp} can be proved directly from the Theorem \ref{FCLT_EMGCHP} and Theorem \ref{LLNEMGCHP} by setting the Markov chain $X_{i,k}$ with two states $(+\delta, -\delta)$  and function $a(x) = x$.
\end{proof}

The FCLT \RomanNumeralCaps{1} of EMGCHP has a similar disadvantage as the FCLT \RomanNumeralCaps{1} of MGCHP. It cannot provide us a model for prediction tasks since we cannot observe the order flow $\vec{N}_t$ in advance. This motivates us to consider a pure diffusion limits of the EMGCHP.

\begin{theorem}: \textbf{FCLT \RomanNumeralCaps{2} for EMGCHP: Deterministic Centralization} \label{EFDFCLT}
	
	Let $X_{i,k}, \, i=1,2,\cdots,d$ be independent ergodic Markov chains with $n$ states $\{1,2,\cdots,n\}$ and with ergodic probabilities $\left(\pi_{i,1}^{*}, \pi_{i,2}^{*}, \ldots, \pi_{i,n}^{*}\right)$. Let $\vec{A}_{nt}$ be $d$-dimensional EMGCHP, we have 
	
	\begin{equation} \label{EFFCLT}
		 \left(\vec{A}_{nt}\right)^{\frac{1}{\sqrt{n}}} \circ \exp { \left[ -\Tilde{a}^* n t (\mathbf{I}-\mathbf{K})^{-1}\vec{\lambda}_\infty \right] } \xrightarrow{n \rightarrow \infty} \exp{   \left[ \Tilde{\sigma}^* \boldsymbol{\Sigma^{1/2}} \vec{W}_1(t) + \Tilde{a}^* (\mathbf{I}-\mathbf{K})^{-1} \boldsymbol{\Sigma^{1/2}} \vec{W}_2(t)\right]}
	\end{equation}  
	for all $t>0$, where $\vec{W}_1(t)$ and $\vec{W}_2(t)$ are independent standard $d$-dimensional Brownian motions. Other parameters $\Tilde{a}^*$, $\Tilde{\sigma}^*$, $\boldsymbol{\Sigma^{1/2}}$, $\vec{\lambda}_\infty $, and $\mathbf{K}$ are defined before.
\end{theorem}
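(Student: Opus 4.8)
The plan is to derive (\ref{EFFCLT}) from the deterministic-centralization functional central limit theorem (FCLT \RomanNumeralCaps{2}) for the MGCHP \cite{Guo2020} by composing with the element-wise exponential map, in exact analogy with the way Theorem \ref{FCLT_EMGCHP} was obtained from FCLT \RomanNumeralCaps{1} for the MGCHP. Recall that FCLT \RomanNumeralCaps{2} for the MGCHP asserts, jointly in $t$ and across coordinates,
\begin{equation}\label{mgchp2plan}
	\frac{\vec{S}_{nt} - \Tilde{a}^* n t\,(\mathbf{I}-\mathbf{K})^{-1}\vec{\lambda}_\infty}{\sqrt{n}} \xrightarrow{n \rightarrow \infty} \Tilde{\sigma}^* \boldsymbol{\Sigma^{1/2}} \vec{W}_1(t) + \Tilde{a}^* (\mathbf{I}-\mathbf{K})^{-1} \boldsymbol{\Sigma^{1/2}} \vec{W}_2(t),
\end{equation}
with $\vec{W}_1,\vec{W}_2$ independent standard $d$-dimensional Brownian motions; here the first summand is the diffusive fluctuation of the Markov additive functionals $\sum_{k\le N_{i,nt}} a(X_{i,k})$ about their conditional means $a_i^* N_{i,nt}$ (that is, FCLT \RomanNumeralCaps{1} for the MGCHP), and the second is the fluctuation of $\Tilde{a}^*\vec{N}_{nt}$ about its deterministic drift $\Tilde{a}^* n t(\mathbf{I}-\mathbf{K})^{-1}\vec{\lambda}_\infty$, supplied by the FCLT for the multivariate Hawkes process.

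First I would rewrite the left-hand side of (\ref{mgchp2plan}) in terms of $\vec{A}_{nt}$. Since $A_{i,nt}=\exp(S_{i,nt})$ and the deterministic initial term contributes only $S_{i,0}/\sqrt{n}\to 0$, one has, coordinatewise,
\[
	\exp\!\left(\frac{S_{i,nt} - a_i^* n t\,\boldsymbol{\Sigma}_{ii}}{\sqrt{n}}\right) = (A_{i,nt})^{1/\sqrt{n}}\;\exp\!\left(-\tfrac{1}{\sqrt{n}}\,a_i^* n t\,\boldsymbol{\Sigma}_{ii}\right),\qquad \boldsymbol{\Sigma}_{ii}=((\mathbf{I}-\mathbf{K})^{-1}\vec{\lambda}_\infty)_i .
\]
Next I would invoke the continuous mapping theorem: the element-wise exponential is continuous, and $x(\cdot)\mapsto\exp[x(\cdot)]$ is a continuous self-map of the Skorokhod space $D([0,T],\mathbb{R}^d)$ for every $T>0$, so (\ref{mgchp2plan}) --- kept as a single vector-valued convergence, so that the individually divergent factors $\exp(S_{i,nt}/\sqrt{n})$ and $\exp(-a_i^*\sqrt{n}\,t\,\boldsymbol{\Sigma}_{ii})$ are never separated --- passes through $\exp[\cdot]$ to give the coordinatewise limit
\[
	(A_{i,nt})^{1/\sqrt{n}}\;\exp\!\left(-\tfrac{1}{\sqrt{n}}\,a_i^* n t\,\boldsymbol{\Sigma}_{ii}\right)\xrightarrow{n\to\infty}\exp\!\left(\sigma_i^*\boldsymbol{\Sigma}_{ii}^{1/2}W_{1,i}(t)+a_i^*\big((\mathbf{I}-\mathbf{K})^{-1}\boldsymbol{\Sigma^{1/2}}\vec{W}_2(t)\big)_i\right).
\]
Collecting the $d$ coordinates into a vector and expressing the result with the element-wise product $\circ$ and the element-wise exponential $\exp[\cdot]$ then gives (\ref{EFFCLT}).

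The substantive work is genuinely contained in FCLT \RomanNumeralCaps{2} for the MGCHP, which I take as given; beyond that the argument is bookkeeping, with two points deserving attention. First, (\ref{mgchp2plan}) must be used in its \emph{joint} form across coordinates, since the coordinates of the limit share the common Brownian motion $\vec{W}_2$ and the continuous mapping theorem has to be applied to the full vector at once rather than one coordinate at a time. Second, the $1/\sqrt{n}$ normalization of the centering must be tracked carefully: it is what turns the natural centering $\Tilde{a}^* n t(\mathbf{I}-\mathbf{K})^{-1}\vec{\lambda}_\infty$ of $\vec{S}_{nt}$ into the compensating exponential factor multiplying $(\vec{A}_{nt})^{1/\sqrt{n}}$ on the left of (\ref{EFFCLT}). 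The independence of $\vec{W}_1$ and $\vec{W}_2$ --- reflecting the asymptotic independence of the Markov-chain fluctuations and the Hawkes counting fluctuations --- and the pure-diffusion nature of the limit (no residual dependence on $\vec{N}_t$) are both inherited directly from FCLT \RomanNumeralCaps{2} for the MGCHP and are not re-derived here.
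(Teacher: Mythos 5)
Your proposal matches the paper's proof: both start from FCLT II for the MGCHP in \cite{Guo2020}, apply the (element-wise) exponential via the continuous mapping theorem, and rewrite the result in terms of $\vec{A}_{nt}$. The extra care you take about joint convergence across coordinates and the $1/\sqrt{n}$ normalization is sound but does not change the route.
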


\begin{proof}
	From \cite{Guo2020}, we have the FCLT 2 for MGCHP in the form of
	\begin{equation} \label{FFCLT2020}
		\frac{\vec{S}_{nt} - \Tilde{a}^* n t (\mathbf{I}-\mathbf{K})^{-1}\vec{\lambda}_\infty}{\sqrt{n}} \xrightarrow{n \rightarrow \infty} \Tilde{\sigma}^* \boldsymbol{\Sigma^{1/2}} \vec{W}_1(t) + \Tilde{a}^* (\mathbf{I}-\mathbf{K})^{-1} \boldsymbol{\Sigma^{1/2}} \vec{W}_2(t), \, for\, all\, t>0.
	\end{equation}
	Taking the exponential function for each element in the left-hand side, we derive the following limit 
	\begin{equation} \label{EEFFCLT}
		\exp \left[ \frac{\vec{S}_{nt} - \Tilde{a}^* n t (\mathbf{I}-\mathbf{K})^{-1}\vec{\lambda}_\infty}{\sqrt{n}} \right] \xrightarrow{n \rightarrow \infty} \exp \left[ \Tilde{\sigma}^* \boldsymbol{\Sigma^{1/2}} \vec{W}_1(t) + \Tilde{a}^* (\mathbf{I}-\mathbf{K})^{-1} \boldsymbol{\Sigma^{1/2}} \vec{W}_2(t)\right].
	\end{equation}
	Rewrite Equation ($\ref{EEFFCLT}$) into ($\ref{EFFCLT}$), we finish the proof.
\end{proof}

With the FCLT \RomanNumeralCaps{2}, we consider an approximation: 
\begin{equation} \label{FCLTap2emgchp}
	\vec{A}_{nt}  \sim \exp \left[ \sqrt{n} \Tilde{\sigma}^* \boldsymbol{\Sigma^{1/2}} \Vec{W}_1(t) + \sqrt{n} \Tilde{a}^* (\mathbf{I}-\mathbf{K})^{-1} \boldsymbol{\Sigma^{1/2}} \Vec{W}_2(t) +   \Tilde{a}^* n  (\mathbf{I}-\mathbf{K})^{-1}\vec{\lambda}_\infty t \right], 
\end{equation} 
for all $t>0$ and some large enough $n$. Here, $\Vec{W}_1(t)$ and $\Vec{W}_2(t)$ are two independent standard $d$-dimensional Brownian motions. Let 
\[C = \left( \sqrt{n} \Tilde{\sigma}^* \boldsymbol{\Sigma^{1/2}},  \sqrt{n} \Tilde{a}^* (\mathbf{I}-\mathbf{K})^{-1} \boldsymbol{\Sigma^{1/2}} \right) \in \mathbb{R}^{d \times 2d} \] 
and $\vec{W}_t$ be a $2d$-dimensional standard Brownian motion, then 
\[\vec{A}_{nt} = \exp { \left[ C\vec{W}_t + \Tilde{a}^* n  ((\mathbf{I}-\mathbf{K})^{-1}\vec{\lambda}_\infty)t \right]  }.\]
By applying the multivariate It\^{o} formula, we derive the price process in the form a multivariate geometric Brownian motion  
\begin{equation} \label{fclt_app}
	dA_{i,t} = A_{i,t} \left\{(  a^*_in  ((\mathbf{I}-\mathbf{K})^{-1}\vec{\lambda}_\infty)_i + \frac{1}{2}(CC')_{ii} ) dt +  C_i d\vec{W}_t \right\},
\end{equation} 
where $C_i$ is the $i$th row of $C = \left(C_1,C_2,\cdots, C_d\right)$.

%%%%%%%%%%%%%%%%%%%%%%
%%%%%%%%%%%%%%%%%%%%%
\subsection{Special Case: One-dimensional EMGCHP and its Limit Theorems}

We consider the one-dimensional EMGCHP and its limit theorems in this section. Let $N_t$ be a one-dimensional Hawkes process with intensity in the form of
\[
\lambda(t) = \lambda_{\infty} + \int_{0}^{t} \mu (t-s) dN_s,
\]
then the one-dimensional EMGCHP is defined as
\begin{equation}\label{1demgchp}
A_{t} = A_{0} \exp \left( \sum_{k=1}^{N_{t}} X_{k} \right).
\end{equation}
 
Two FCLTs and LLN can be derived directly from the multivariate case.

\begin{theorem}: \textbf{FCLT \RomanNumeralCaps{1} for one-dimensional EMGCHP: Stochastic Centralization} 
	Let $X_{k}$ be ergodic Markov chains with $n$ states $\{1,2,\cdots,n\}$ and with ergodic probabilities $\left(\pi_{1}^{*}, \pi_{2}^{*}, \ldots, \pi_{n}^{*}\right)$. Let ${A}_t$ be $1$-dimensional EMGCHP, we have 
	\begin{equation} 
	{  \left({A}_{nt}\right)^{\frac{1}{\sqrt{n}}}    e^{-\frac{1}{\sqrt{n}}{a}^* {N}_{nt}}}  \longrightarrow e^{ {\sigma}^* {\sqrt{\frac{\lambda_{\infty}}{1-\hat{\mu}}}} {W}(t)}, \, for\, all\, t>0 
	\end{equation}   
	as $n \rightarrow \infty$, where ${W}(t)$ is a standard $1$-dimensional Brownian motion, $\hat{\mu}=\int_{0}^{\infty}\mu(s)ds$. $a^*$ and $\sigma^*$ are defined below:	
	$a^* = \sum_{k \in X} \pi_{k}^{*} a\left(X_{k}\right)$, and $\left(\sigma^{*}\right)^{2} :=\sum_{k \in X} \pi_{k}^{*} v(k)$ with
	\[
	\begin{aligned} v(k) &=b(k)^{2}+\sum_{j \in X}(g(j)-g(k))^{2} P(k, j) - 2 b(k) \sum_{j \in X}(g(j)-g(k)) P(k, j) \\ b &=(b(1), b(2), \ldots, b(n))^{\prime} \\ b(k) : &=a(k)-a^{*} \\ g : &=\left(P+\Pi^{*}-I\right)^{-1} b, \end{aligned}
	\]
	where $P$ is the  transition probability matrix for the Markov chain $X$, $\Pi^*$ is the matrix of stationary distributions of $P$,
	and $g(j)$ is the $j$th entry of $g$.	
\end{theorem}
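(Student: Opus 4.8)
The plan is to derive this statement as the one‑dimensional specialization of Theorem~\ref{FCLT_EMGCHP}, or, what amounts to the same thing, by rerunning that theorem's continuous‑mapping argument with every matrix collapsed to a scalar. First I would invoke the FCLT~I for the one‑dimensional general compound Hawkes process (the $d=1$ instance of the MGCHP result from \cite{Guo2020} that is used in the proof of Theorem~\ref{FCLT_EMGCHP}): writing $S_{nt}$ for the one‑dimensional general compound Hawkes process built from $N_{t}$ and $\{X_{k}\}$, one has
\[
\frac{S_{nt}-a^{*}N_{nt}}{\sqrt{n}}\ \xrightarrow{\,n\to\infty\,}\ \sigma^{*}\sqrt{\frac{\lambda_{\infty}}{1-\hat{\mu}}}\,W(t),\qquad t>0,
\]
in the Skorohod topology, with $W$ a standard one‑dimensional Brownian motion. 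Here one uses that in dimension one $\mathbf{K}=\hat{\mu}$, so that $\boldsymbol{\Sigma}_{11}=\big((\mathbf{I}-\mathbf{K})^{-1}\vec{\lambda}_{\infty}\big)_{1}=\lambda_{\infty}/(1-\hat{\mu})$, which is exactly the variance constant appearing under the square root in the statement, and that $a^{*},\sigma^{*}$ reduce to the scalar quantities displayed in the theorem.

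Next I would apply the continuous mapping theorem to the pathwise exponential map $y(\cdot)\mapsto e^{y(\cdot)}$, which is continuous on $D([0,\infty),\mathbb{R})$ equipped with the Skorohod $J_{1}$ topology. This immediately gives
\[
\exp\!\left(\frac{S_{nt}-a^{*}N_{nt}}{\sqrt{n}}\right)\ \xrightarrow{\,n\to\infty\,}\ \exp\!\left(\sigma^{*}\sqrt{\tfrac{\lambda_{\infty}}{1-\hat{\mu}}}\,W(t)\right),\qquad t>0.
\]

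Finally I would translate the left‑hand side back into $A$‑notation via (\ref{1demgchp}). Since $A_{nt}=A_{0}e^{S_{nt}}$, one has $(A_{nt})^{1/\sqrt{n}}=A_{0}^{1/\sqrt{n}}\,e^{S_{nt}/\sqrt{n}}$ with $A_{0}^{1/\sqrt{n}}\to 1$; multiplying the previous display through by $e^{-a^{*}N_{nt}/\sqrt{n}}$ and using Slutsky's lemma to discard the deterministic factor $A_{0}^{1/\sqrt{n}}$ rewrites the left‑hand side as $\big(A_{nt}\big)^{1/\sqrt{n}}\,e^{-a^{*}N_{nt}/\sqrt{n}}$, which is the claimed expression.

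The one point that deserves (routine) care is the justification that composition with $e^{(\cdot)}$ preserves $J_{1}$‑convergence, so that the continuous mapping theorem applies at the level of c\`adl\`ag processes and not merely for a fixed $t$; granting that, the remainder is identical bookkeeping to the proof of Theorem~\ref{FCLT_EMGCHP} with $d=1$. Alternatively, one can bypass this step entirely by simply reading off the $i=1$, $d=1$ case of (\ref{ETFCLT}).
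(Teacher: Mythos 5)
Your proposal is correct and matches the paper's route: the paper gives no separate proof for the one‑dimensional theorem, stating only that it "can be derived directly from the multivariate case," and your argument is exactly that specialization (with $\mathbf{K}=\hat{\mu}$, $\boldsymbol{\Sigma}_{11}=\lambda_{\infty}/(1-\hat{\mu})$) combined with the same take‑the‑exponential/continuous‑mapping step used in the proof of Theorem~\ref{FCLT_EMGCHP}. Your extra care with the $A_{0}^{1/\sqrt{n}}\to 1$ factor and with continuity of the exponential map on the Skorohod space is a minor refinement of, not a departure from, the paper's argument.
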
 
 
\begin{theorem}\textbf{LLN for One-dimensional EMGCHP}
	Let ${A}_{nt}$ be an one-dimensional EMGCHP, we have
	\begin{equation} 
	\sqrt[n]{{A}_{nt}}	 \longrightarrow e^{{a}^* {\frac{\lambda_{\infty}}{1-\hat{\mu}}} t }
	\end{equation}	
	in the law of Skorohod topology as $n \rightarrow +\infty$, where ${a}^*$ is defined before.
\end{theorem}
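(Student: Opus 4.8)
The plan is to obtain this statement as the one-dimensional specialization of Theorem \ref{LLNEMGCHP} (equivalently, to mimic its proof in dimension $d=1$). First I would recall the LLN for the one-dimensional general compound Hawkes process $S_{nt} = S_0 + \sum_{k=1}^{N_{nt}} a(X_k)$, which follows from the MGCHP LLN invoked in the proof of Theorem \ref{LLNEMGCHP} by taking $d=1$ and observing that in one dimension the matrix $\mathbf{K}=\int_0^\infty \boldsymbol{\mu}(t)\,dt$ reduces to the scalar $\hat{\mu} = \int_0^\infty \mu(s)\,ds$, so that $(\mathbf{I}-\mathbf{K})^{-1}\vec{\lambda}_\infty = \frac{\lambda_\infty}{1-\hat{\mu}}$. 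This yields
\[
\frac{S_{nt}}{n} \xrightarrow{n\to\infty} a^* \frac{\lambda_\infty}{1-\hat{\mu}}\, t
\]
in the Skorohod topology, where $a^* = \sum_{k\in X}\pi_k^* a(X_k)$.

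Next, since $A_t = A_0 \exp\!\left(\sum_{k=1}^{N_t} X_k\right) = \exp(S_t)$ with $S_0 = \ln A_0$, we have the exact identity $\sqrt[n]{A_{nt}} = \exp(S_{nt}/n) = \exp(S_0/n)\cdot\exp\!\left(\tfrac{1}{n}\sum_{k=1}^{N_{nt}} a(X_k)\right)$. The factor $\exp(S_0/n)\to 1$ and contributes nothing to the limit, so it remains only to pass the convergence of $S_{nt}/n$ through the exponential map.

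For that step I would invoke the continuous mapping theorem: the map $x\mapsto e^x$ is continuous from $(D[0,\infty),J_1)$ into itself, hence $\exp(S_{nt}/n) \longrightarrow \exp\!\left(a^*\frac{\lambda_\infty}{1-\hat{\mu}}\,t\right)$ in the Skorohod topology as well. Equivalently, because the limiting trajectory $t\mapsto a^*\frac{\lambda_\infty}{1-\hat{\mu}}\,t$ is continuous and deterministic, the $J_1$ convergence upgrades to uniform convergence on compact time intervals, on each of which $\exp$ is Lipschitz, so the composed convergence is immediate. Rewriting the result as $\sqrt[n]{A_{nt}} \longrightarrow e^{a^*\frac{\lambda_\infty}{1-\hat{\mu}}\,t}$ completes the argument.

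I do not anticipate a genuine obstacle: the only point requiring a moment of care is checking that the continuous-mapping/composition argument is valid in the Skorohod topology rather than merely pointwise in $t$, and this is precisely what the continuity of the limiting path guarantees.
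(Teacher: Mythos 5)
Your proposal is correct and follows essentially the same route as the paper: the paper states that the one-dimensional LLN ``can be derived directly from the multivariate case,'' and its proof of the multivariate version (Theorem \ref{LLNEMGCHP}) is exactly your argument of specializing the MGCHP LLN and exponentiating, with $(\mathbf{I}-\mathbf{K})^{-1}\vec{\lambda}_\infty$ reducing to $\lambda_\infty/(1-\hat{\mu})$ in dimension one. Your additional care in justifying the exponentiation step via continuity of the deterministic limiting path in the Skorohod topology is a welcome refinement of a step the paper passes over silently, but it is not a different approach.
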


\begin{theorem}: \textbf{FCLT \RomanNumeralCaps{2} for One-dimensional EMGCHP: Deterministic Centralization} 
	
	Let $X_{k}$ be ergodic Markov chains with $n$ states $\{1,2,\cdots,n\}$ and with ergodic probabilities $\left(\pi_{1}^{*}, \pi_{2}^{*}, \ldots, \pi_{n}^{*}\right)$. Let ${A}_{nt}$ be $1$-dimensional EMGCHP, we have 
	
	\begin{equation}
	{ \left({A}_{nt}\right)^{\frac{1}{\sqrt{n}}}  e^{-{a}^* \sqrt{n} t {\frac{\lambda_{\infty}}{1-\hat{\mu}}}}} \xrightarrow{n \rightarrow \infty} e^{{\sigma}^* {\sqrt{\frac{\lambda_{\infty}}{1-\hat{\mu}}}} {W}_1(t) + {a}^* {\sqrt{\frac{\lambda_{\infty}}{(1-\hat{\mu})^3}}} {W}_2(t)}, \, for\, all\, t>0 
	\end{equation}  
	where ${W}_1(t)$ and ${W}_2(t)$ are independent standard Brownian motions. Other parameters ${a}^*$ and ${\sigma}^*$ are defined before.
\end{theorem}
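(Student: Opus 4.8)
The plan is to obtain this statement as the one-dimensional ($d=1$) specialization of Theorem~\ref{EFDFCLT} (or, equivalently, by running that theorem's proof scheme directly in dimension one). Setting $d=1$ in the EMGCHP \eqref{EMGCHP}, the diagonal matrices $\Tilde{a}^{*}$ and $\Tilde{\sigma}^{*}$ become the scalars $a^{*}$ and $\sigma^{*}$; the kernel matrix $\mathbf{K}=\int_{0}^{\infty}\boldsymbol{\mu}(t)\,dt$ becomes the scalar $\hat{\mu}=\int_{0}^{\infty}\mu(s)\,ds$; the vector $(\mathbf{I}-\mathbf{K})^{-1}\vec{\lambda}_{\infty}$ becomes $\lambda_{\infty}/(1-\hat{\mu})$; and the diagonal matrix $\boldsymbol{\Sigma}$, whose $ii$-entry is $((\mathbf{I}-\mathbf{K})^{-1}\vec{\lambda}_{\infty})_{i}$, collapses to the scalar $\lambda_{\infty}/(1-\hat{\mu})$, so that $\boldsymbol{\Sigma}^{1/2}=\sqrt{\lambda_{\infty}/(1-\hat{\mu})}$. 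Substituting these into \eqref{EFFCLT}: the right-hand coefficient $\Tilde{\sigma}^{*}\boldsymbol{\Sigma}^{1/2}$ becomes $\sigma^{*}\sqrt{\lambda_{\infty}/(1-\hat{\mu})}$, and $\Tilde{a}^{*}(\mathbf{I}-\mathbf{K})^{-1}\boldsymbol{\Sigma}^{1/2}$ becomes $a^{*}\cdot\tfrac{1}{1-\hat{\mu}}\cdot\sqrt{\lambda_{\infty}/(1-\hat{\mu})}=a^{*}\sqrt{\lambda_{\infty}/(1-\hat{\mu})^{3}}$, which are exactly the two coefficients in the claimed limit; and the $d$-dimensional Brownian motions $\vec{W}_{1},\vec{W}_{2}$ become independent scalar Brownian motions $W_{1},W_{2}$.

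Equivalently, and perhaps more transparently, I would reproduce the argument used in the proof of Theorem~\ref{EFDFCLT}: start from the FCLT~II for the one-dimensional (additive) general compound Hawkes process,
\[
\frac{S_{nt}-a^{*}nt\,\lambda_{\infty}/(1-\hat{\mu})}{\sqrt{n}}\ \Longrightarrow\ \sigma^{*}\sqrt{\tfrac{\lambda_{\infty}}{1-\hat{\mu}}}\,W_{1}(t)+a^{*}\sqrt{\tfrac{\lambda_{\infty}}{(1-\hat{\mu})^{3}}}\,W_{2}(t),
\]
which is the $d=1$ case of \eqref{FFCLT2020}, holding in the Skorohod topology on $D[0,\infty)$, and apply the continuous mapping theorem to the (locally Lipschitz, hence continuous) functional $x(\cdot)\mapsto\exp(x(\cdot))$. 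Writing $\exp(S_{nt}/\sqrt{n})=(A_{nt}/A_{0})^{1/\sqrt{n}}$ and noting that $A_{0}^{1/\sqrt{n}}\to1$ (so it is asymptotically negligible, e.g.\ by Slutsky), one then moves the deterministic centering to the other side of the convergence; once the power $1/\sqrt{n}$ is expressed as an additive centering of $S_{nt}$, the factor multiplying $a^{*}t\,\lambda_{\infty}/(1-\hat{\mu})$ in that centering is $\sqrt{n}$, which yields the $e^{-a^{*}\sqrt{n}\,t\,\lambda_{\infty}/(1-\hat{\mu})}$ appearing in the statement.

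I do not expect any substantive analytic obstacle: everything of substance is inherited from Theorem~\ref{EFDFCLT} (itself built on the MGCHP FCLT~II of \cite{Guo2020}) together with continuity of the exponential. The only point requiring care is the bookkeeping described above --- verifying that the three scalar reductions $\mathbf{K}\mapsto\hat{\mu}$, $(\mathbf{I}-\mathbf{K})^{-1}\vec{\lambda}_{\infty}\mapsto\lambda_{\infty}/(1-\hat{\mu})$, $\boldsymbol{\Sigma}\mapsto\lambda_{\infty}/(1-\hat{\mu})$ are mutually consistent, that the $\sqrt{n}$ (not $n$) scaling of the deterministic centering is tracked correctly, and that the $A_{0}^{1/\sqrt{n}}\to1$ normalization is disposed of cleanly.
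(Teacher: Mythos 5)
Your proposal is correct and takes exactly the route the paper intends: the paper gives no separate proof of this theorem, stating only that the one-dimensional results ``can be derived directly from the multivariate case,'' which is precisely your specialization of Theorem~\ref{EFDFCLT} to $d=1$ with the scalar reductions $\mathbf{K}\mapsto\hat{\mu}$, $(\mathbf{I}-\mathbf{K})^{-1}\vec{\lambda}_{\infty}\mapsto\lambda_{\infty}/(1-\hat{\mu})$, $\boldsymbol{\Sigma}\mapsto\lambda_{\infty}/(1-\hat{\mu})$. Your bookkeeping is right, and your observation that the centering exponent must carry $\sqrt{n}$ (as in the one-dimensional statement) rather than the $n$ written in display~(\ref{EFFCLT}) correctly identifies what is in fact a typo in the multivariate theorem's statement.
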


With the FCLT \RomanNumeralCaps{1} for the one-dimensional EMGCHP, we consider an approximation: 
\begin{equation}
{A}_{nt}  \sim \exp \left( \sqrt{n} {\sigma}^* {\sqrt{\frac{\lambda_{\infty}}{1-\hat{\mu}}}} {W}_1(t) + \sqrt{n} {a}^* {\sqrt{\frac{\lambda_{\infty}}{(1-\hat{\mu})^3}}} {W}_2(t) + {a}^*n  {\frac{\lambda_{\infty}}{1-\hat{\mu}}}t \right), 
\end{equation} 
for all $t>0$ and some large enough $n$. Noting that $\hat{\mu}=\int_{0}^{\infty}\mu(s)ds$. 

We can merge ${W}_1(t)$ and ${W}_2(t)$ in the form of 
\begin{equation}
{A}_{nt}  \sim \exp \left( 
\sqrt{n{\sigma^*}^2 \frac{\lambda_\infty}{1-\hat{\mu}} + n{a^*}^2 \frac{\lambda_{\infty}}{(1-\hat{\mu})^3}} W(t)+
{a}^*n  {\frac{\lambda_{\infty}}{1-\hat{\mu}}}t \right). 
\end{equation}

With It\^{o} formula, we can rewrite it as
\begin{equation} \label{1demgchpgeo}
\begin{aligned}
dA_{nt} &= A_{nt} \left(  a^*n \frac{\lambda_\infty}{1-\hat{\mu}} + n{\sigma^*}^2 \frac{\lambda_\infty}{2(1-\hat{\mu})} + n{a^*}^2 \frac{\lambda_{\infty}}{2(1-\hat{\mu})^3}\right) dt \\& + A_{nt} \sqrt{n{\sigma^*}^2 \frac{\lambda_\infty}{1-\hat{\mu}} + n{a^*}^2 \frac{\lambda_{\infty}}{(1-\hat{\mu})^3}} dW_t 
\end{aligned},
\end{equation}
if we consider the exponential decay as the intensity function, $\hat{\mu} = \alpha/\beta$.

%%%%%%%%%%%%%%%%%%%%%%%%%%%%%%%%%%%%%%%%%%
\section{Option Pricing with MGCHP in LOB}

In 2019, Simard and R{\'e}millard considered the European option pricing in the LOB by a discrete-time model \cite{SR}. We consider the option pricing problem in the LOB with the EMGCHP model in this section, which is a continuous-time method. The FCLTs and LLN tell us that the stock price will converge to geometric Brownian motions if $n$ is large enough. In the high-frequency trading problem, orders are counted in milliseconds, which makes option expiry time relatively large. In this way, the approximation is appropriate for the option pricing problem. The convergence of options prices comes from the convergence of the processes, together with the uniform integrability of the options payoff. The precise result is that if $X_n$ converges in law to $X$ and $\Phi$ is uniformly integrable, then $E( \Phi(X_n))$  converges to $E(\Phi(X))$.

\subsection{Hawkes-based Black–Scholes Equation}

Let $A_{t}$ be the stock price at time $t$ in high-frequency context modeled by the EMGCHP in equation (\ref{EMGCHP}). With the FCLT II for the EMCGHP and approximation (\ref{FCLTap2emgchp}), we have the following price dynamics when $n$ is large enough 

\begin{equation} \label{sde_gbm_1}
	\begin{aligned}
		dA_{nt} &= A_{nt} \left(  a^*n \frac{\lambda_\infty}{1-\alpha/\beta} + n{\sigma^*}^2 \frac{\lambda_\infty}{2(1-\alpha/\beta)} + n{a^*}^2 \frac{\lambda_{\infty}}{2(1-\alpha/\beta)^3}\right) dt \\& + A_{nt} \sqrt{n{\sigma^*}^2 \frac{\lambda_\infty}{1-\alpha/\beta} + n{a^*}^2 \frac{\lambda_{\infty}}{(1-\alpha/\beta)^3}} dW_t 
	\end{aligned},
\end{equation}
where $\lambda_\infty$, $\alpha$, and $\beta$ are parameters in the Hawkes process with exponential decay excitation. $\sigma^*$ and $a^*$ are given in Theorem \ref{FCLT_EMGCHP}. Here we considered the $1$-dimensional EMGCHP, so we use notations $\sigma^*$ and $a^*$ instead of $\Tilde{\sigma}^*$ and $\Tilde{a}^*$. 

The stochastic differential equation (\ref{sde_gbm_1}) implies that $A_t$ is a geometric Brownian motion with drift $\left(  a^*n \frac{\lambda_\infty}{1-\alpha/\beta} + n{\sigma^*}^2 \frac{\lambda_\infty}{2(1-\alpha/\beta)} + n{a^*}^2 \frac{\lambda_{\infty}}{2(1-\alpha/\beta)^3}\right)$ and volatility $\sqrt{n{\sigma^*}^2 \frac{\lambda_\infty}{1-\alpha/\beta} + n{a^*}^2 \frac{\lambda_{\infty}}{(1-\alpha/\beta)^3}}$. By applying the Girsanov theorem, we can rewrite (\ref{sde_gbm_1}) as
\[
dA_t = rA_t dt + A_t \sqrt{n{\sigma^*}^2 \frac{\lambda_\infty}{1-\alpha/\beta} + n{a^*}^2 \frac{\lambda_{\infty}}{(1-\alpha/\beta)^3}} d \tilde{W}_t,
\]    
where 
\[
d\tilde{W}_t = d W_t + \frac{\left(  a^*n \frac{\lambda_\infty}{1-\alpha/\beta} + n{\sigma^*}^2 \frac{\lambda_\infty}{1-\alpha/\beta} + n{a^*}^2 \frac{\lambda_{\infty}}{(1-\alpha/\beta)^3} - r\right)}{\sqrt{n{\sigma^*}^2 \frac{\lambda_\infty}{1-\alpha/\beta} + n{a^*}^2 \frac{\lambda_{\infty}}{(1-\alpha/\beta)^3}}} dt
\]
is a Brownian motion under the risk-neutral measure.

Consider the vanilla European call option with payoff $(A_T - K)^+$, where $T$ is the maturity time and $K$ is the strike price. Let $c(t,x)$ denote the option price at time $t$ and $A_t = x$, $r$ denote the constant interest rate, we can derive the corresponding Black–Scholes partial differential equation in the form of
\begin{equation}\label{bsmhawkes}
	\begin{cases}
		c_t(t,x) + rxc_x(t,x)+\frac{1}{2}x^2\left(n{\sigma^*}^2 \frac{\lambda_\infty}{1-\alpha/\beta} + n{a^*}^2 \frac{\lambda_{\infty}}{(1-\alpha/\beta)^3}\right)c_{xx}(t,x) = rc(t,x)\\ c(T,x) = (x-K)^+\\
		c(t,0) = 0 \\
		\lim_{x \rightarrow \infty}\left[c(t,x)-(x-e^{-r(T-t)}K) \right] =0
	\end{cases}.
\end{equation}

According to the classical Black–Scholes equation theory (a good text book is \cite{Shr2004}), we have the solution of equation (\ref{bsmhawkes}) in the form of

%%%%%%%%%%%%%%

%%%%%%%%%%New Eq. (46)
\begin{equation}
	\vec{A}_{nt}  \sim \exp \left[ \sqrt{n} \Tilde{\sigma}^* \boldsymbol{\Sigma^{1/2}} \Vec{W}_1(t) + \sqrt{n} \Tilde{a}^* (\mathbf{I}-\mathbf{K})^{-1} \boldsymbol{\Sigma^{1/2}} \Vec{W}_2(t) +   \Tilde{a}^* n  (\mathbf{I}-\mathbf{K})^{-1}\vec{\lambda}_\infty t \right], 
\end{equation}

%%%%%%%%%%%%%%%%%%%%%
where 
\begin{equation}\label{d_pm}
	d_{\pm}(\tau,x) = \frac{1}{\sqrt{n\tau{\sigma^*}^2 \frac{\lambda_\infty}{1-\alpha/\beta} + n\tau{a^*}^2 \frac{\lambda_{\infty}}{(1-\alpha/\beta)^3}}} \left[\log \frac{x}{K} +  \left(r\pm  (\frac{n{\sigma^*}^2}{2} \frac{\lambda_\infty}{1-\alpha/\beta} + \frac{n{a^*}^2}{2} \frac{\lambda_{\infty}}{(1-\alpha/\beta)^3})\right)\tau  \right],
\end{equation}
and $\Phi$ denotes the cumulative standard normal distribution

\begin{equation}
	\Phi(y) = \frac{1}{\sqrt{2\pi}}\int_{-\infty}^{y} e^{-\frac{z^2}{2}}dz.
\end{equation}

%%%%%%%%%%%%%%%%%%%%%
\subsection{Numerical Examples}

Next, we consider the numerical examples for the Hawkes-based European call option. We take the parameters' values in previous chapter for Hawkes process and set the strike price $K = 50$, interest rate $r = 0.06$, maturity time $T=1$. Parameters can be found in Table \ref{option} and Figure \ref{option_ST_h} shows the relationship between option price $c(t,x)$, spot price $A_t=x$, and time $t$.  

\begin{table}[H]
	\centering
	\caption{Parameters for  Hawkes-based European call option}
	\begin{tabular}{ccccccccc}
		\toprule
		\textbf{Parameter} & $K$ & $r$ & $\alpha$ & $\beta$ & $\lambda_\infty$ &$\sigma^*$& $a^*$ &$T$    \\ \midrule
		Value   & 50   & 0.06   & 0.7     & 1 & 0.75 &0.05&0.03&1 \\ \bottomrule
	\end{tabular}
	\label{option}
\end{table}

\begin{figure}[H]
	\centering
	\includegraphics[width=10cm]{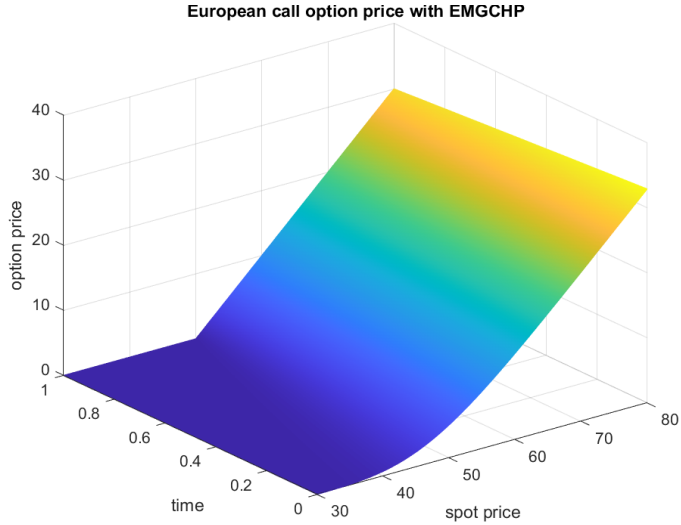}
	\caption{ Hawkes-based European call option.}
	\label{option_ST_h}
\end{figure}

We present some figures of the option price $c$ as a function of $\alpha/\beta$. Parameters can be found in Table \ref{option} and $\alpha/\beta \in [0.1,0.9]$. As can be seen in Figure \ref{option_ST_k}, for fixed spot price, when $\alpha/\beta$ is small, different attitude of the order flow's history cannot change the option price a lot. When $\alpha/\beta$ closes to $1$, the option price increases dramatically, as $\alpha/\beta$ increases.

\begin{figure}[H]
	\centering
	% first group of figures
	\begin{minipage}{0.4\textwidth}
		\includegraphics[width=\linewidth]{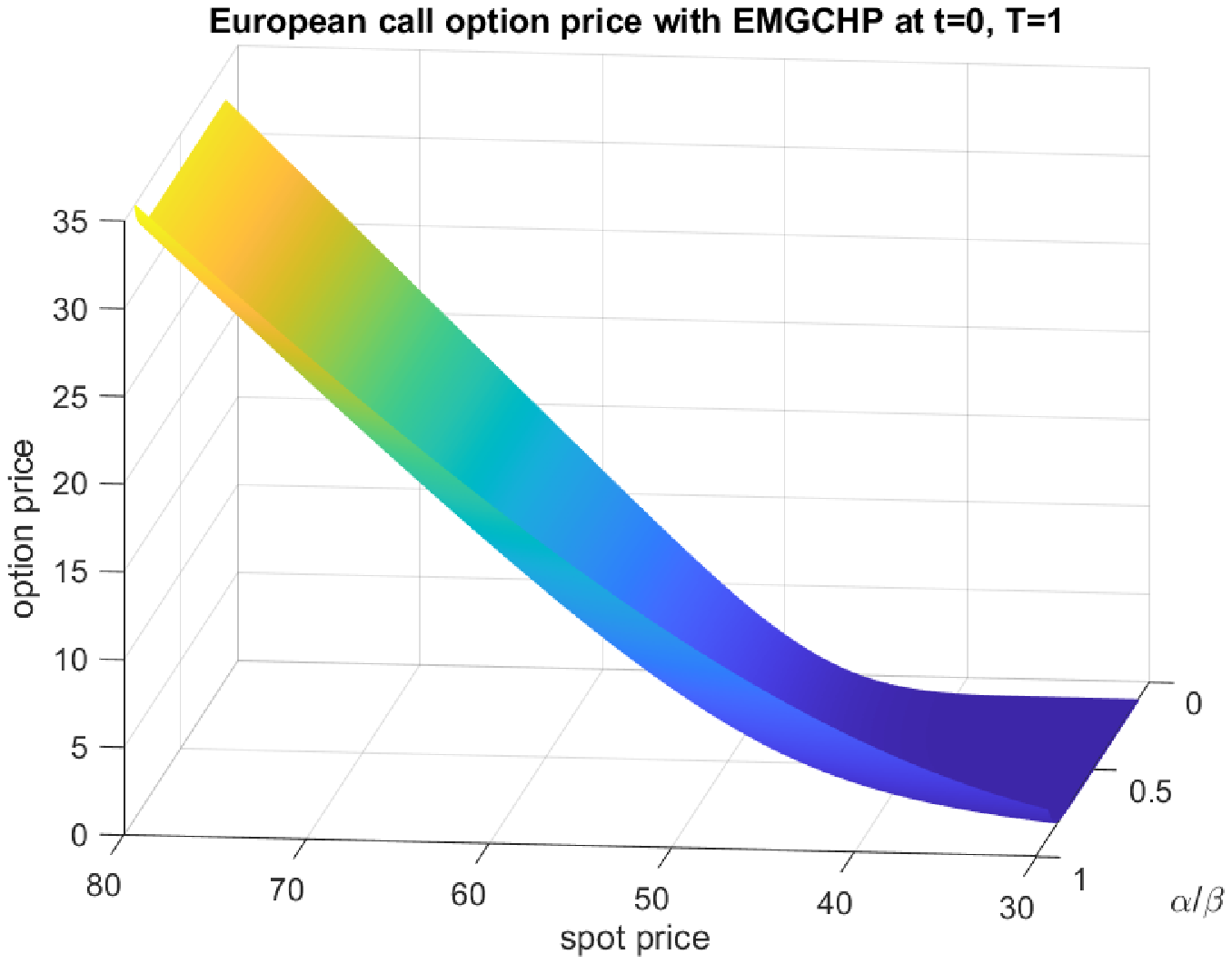}
	\end{minipage}
	\hspace{3mm} % choose horizontal spacing to suit your needs
	\begin{minipage}{0.4\textwidth}
		\includegraphics[width=\linewidth]{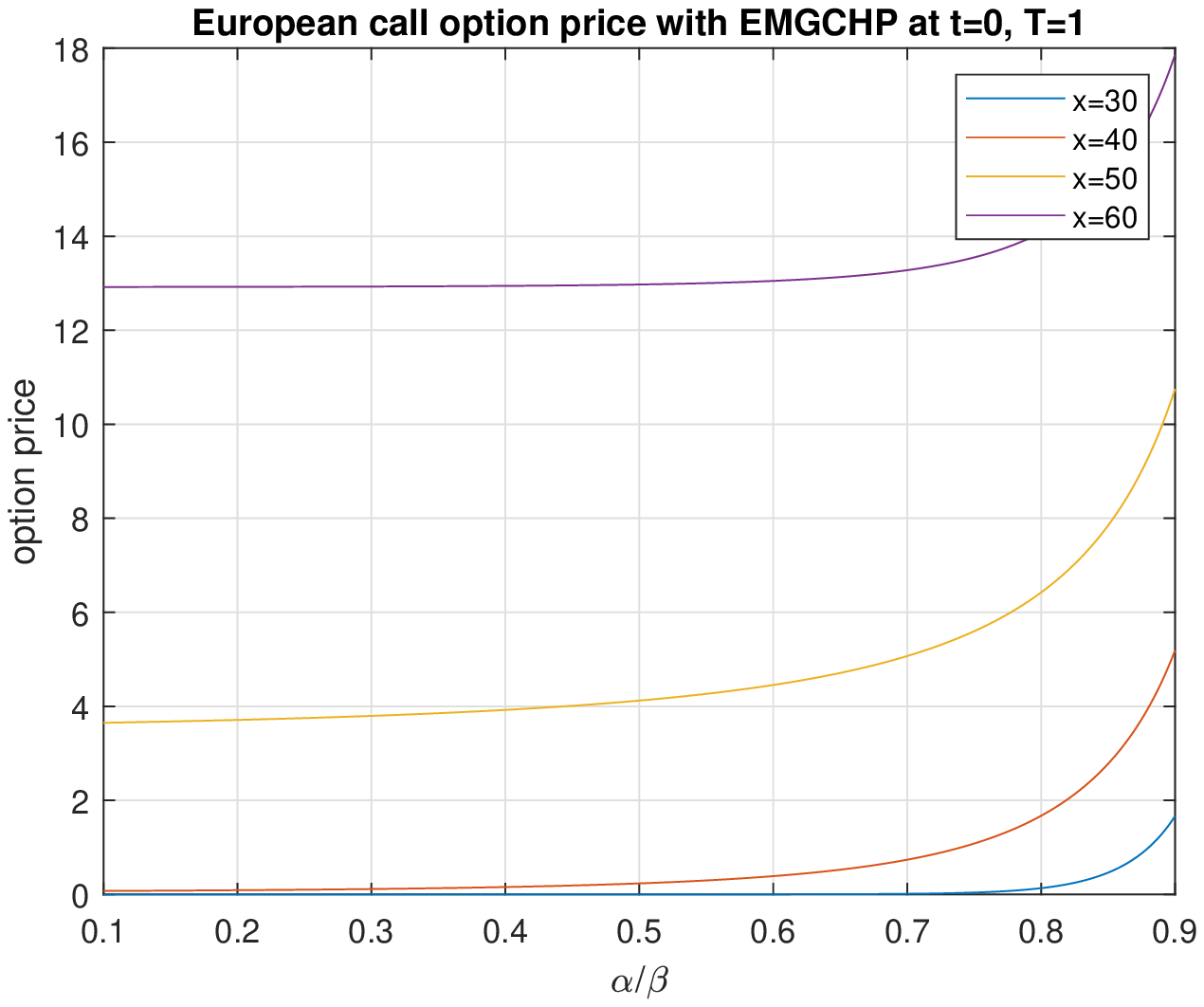}
	\end{minipage}
	\hspace{3mm} % choose horizontal spacing to suit your needs
	\begin{minipage}{0.4\textwidth}
		\includegraphics[width=\linewidth]{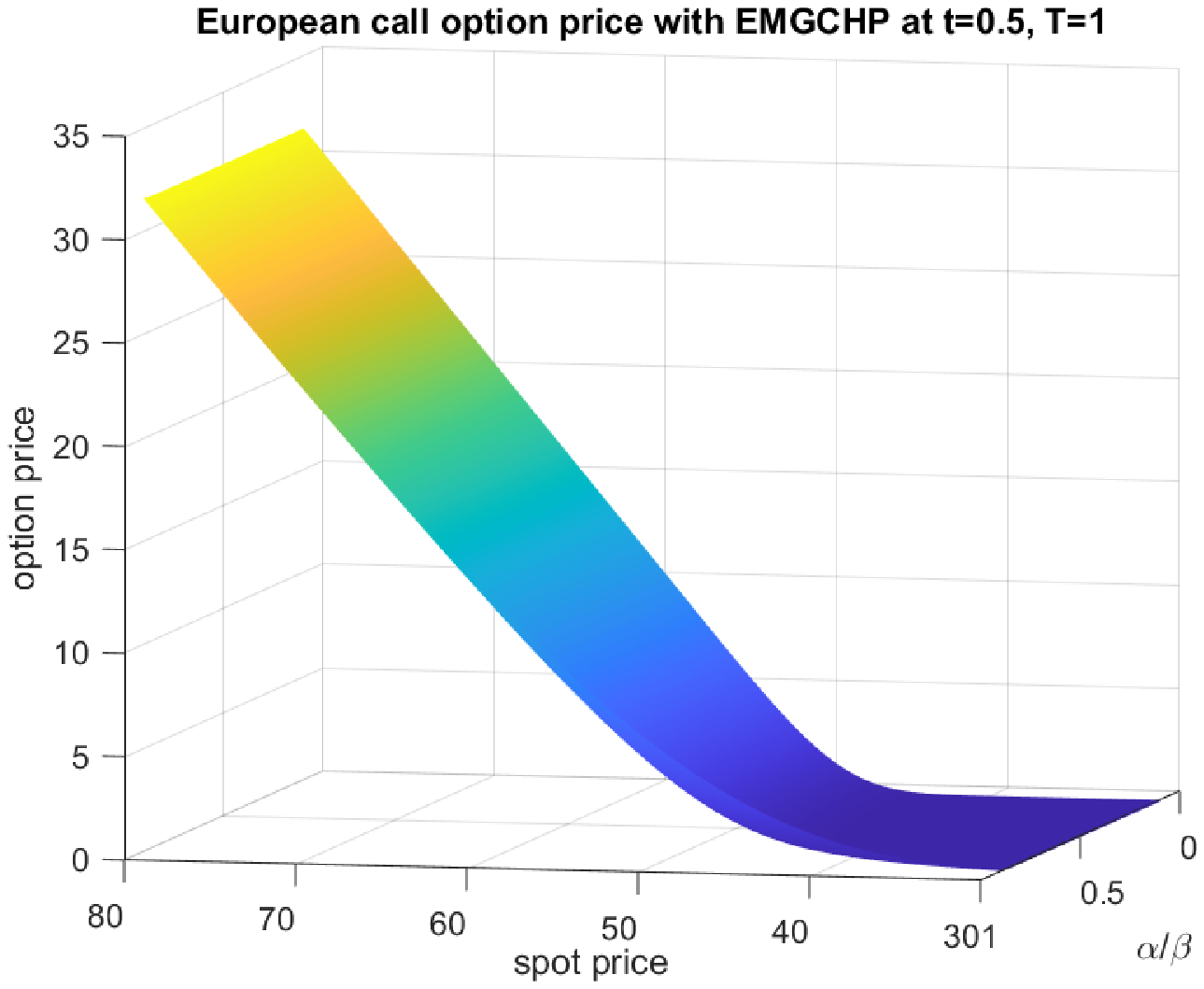}
	\end{minipage}
	\begin{minipage}{0.4\textwidth}
		\includegraphics[width=\linewidth]{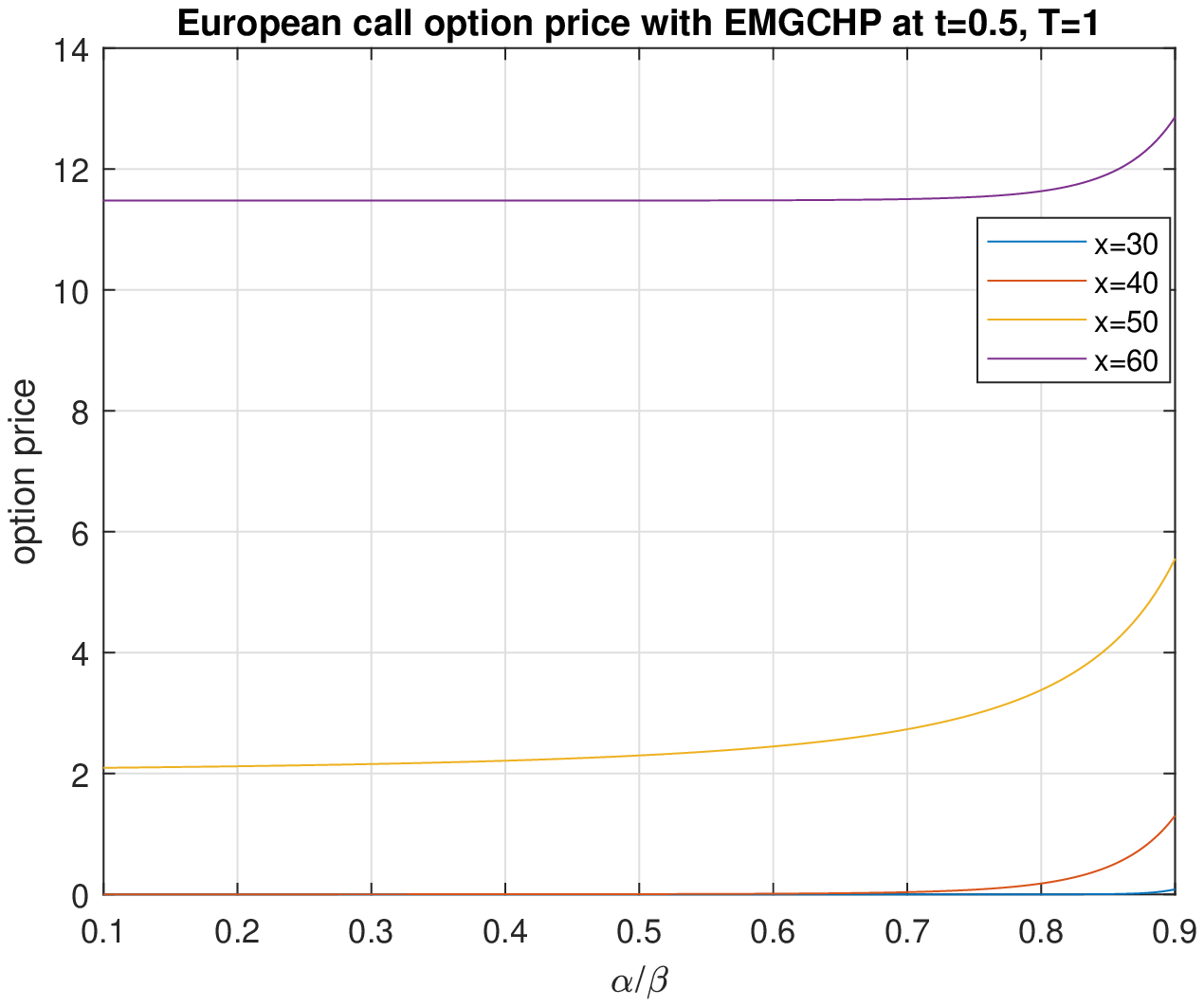}
	\end{minipage}
	\caption{Hawkes-based European call option with $\alpha/\beta \in [0.1,0.9]$ }
	\label{option_ST_k}
\end{figure}

Relationships of $\sigma^*$, $a^*$ and the option price $c$ can be found in Figure \ref{option_ST_sig}. We can conclude that the option price $c$ is proportional to the expectation of stock's jump size $a^*$ and the variance of stock's jump size $\sigma^*$.     

\begin{figure}[H]
	\centering
	% first group of figures
	\begin{minipage}{0.4\textwidth}
		\includegraphics[width=\linewidth]{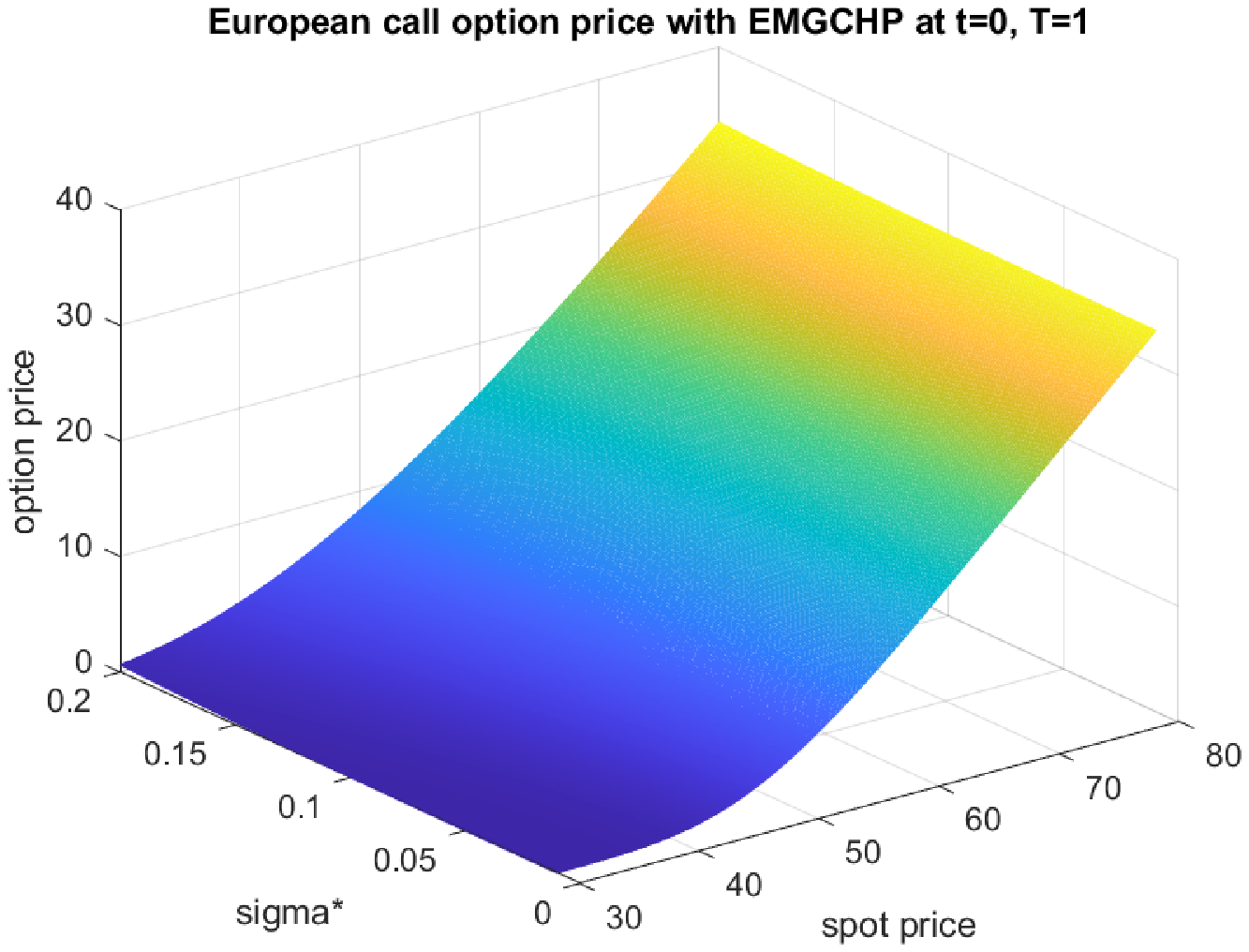}
	\end{minipage}
	\hspace{3mm} % choose horizontal spacing to suit your needs
	\begin{minipage}{0.4\textwidth}
		\includegraphics[width=\linewidth]{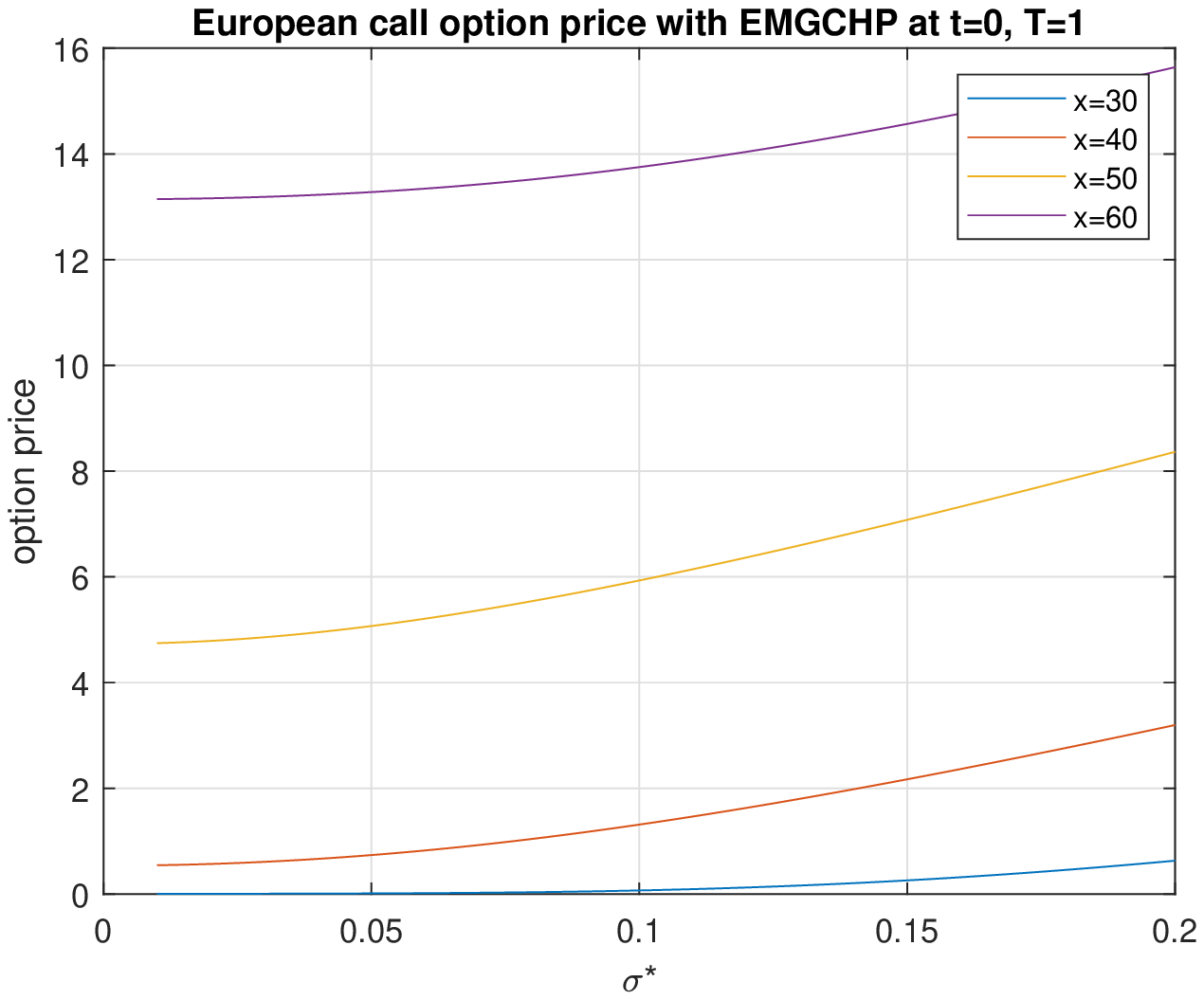}
	\end{minipage}
	\hspace{3mm} % choose horizontal spacing to suit your needs
	\begin{minipage}{0.4\textwidth}
		\includegraphics[width=\linewidth]{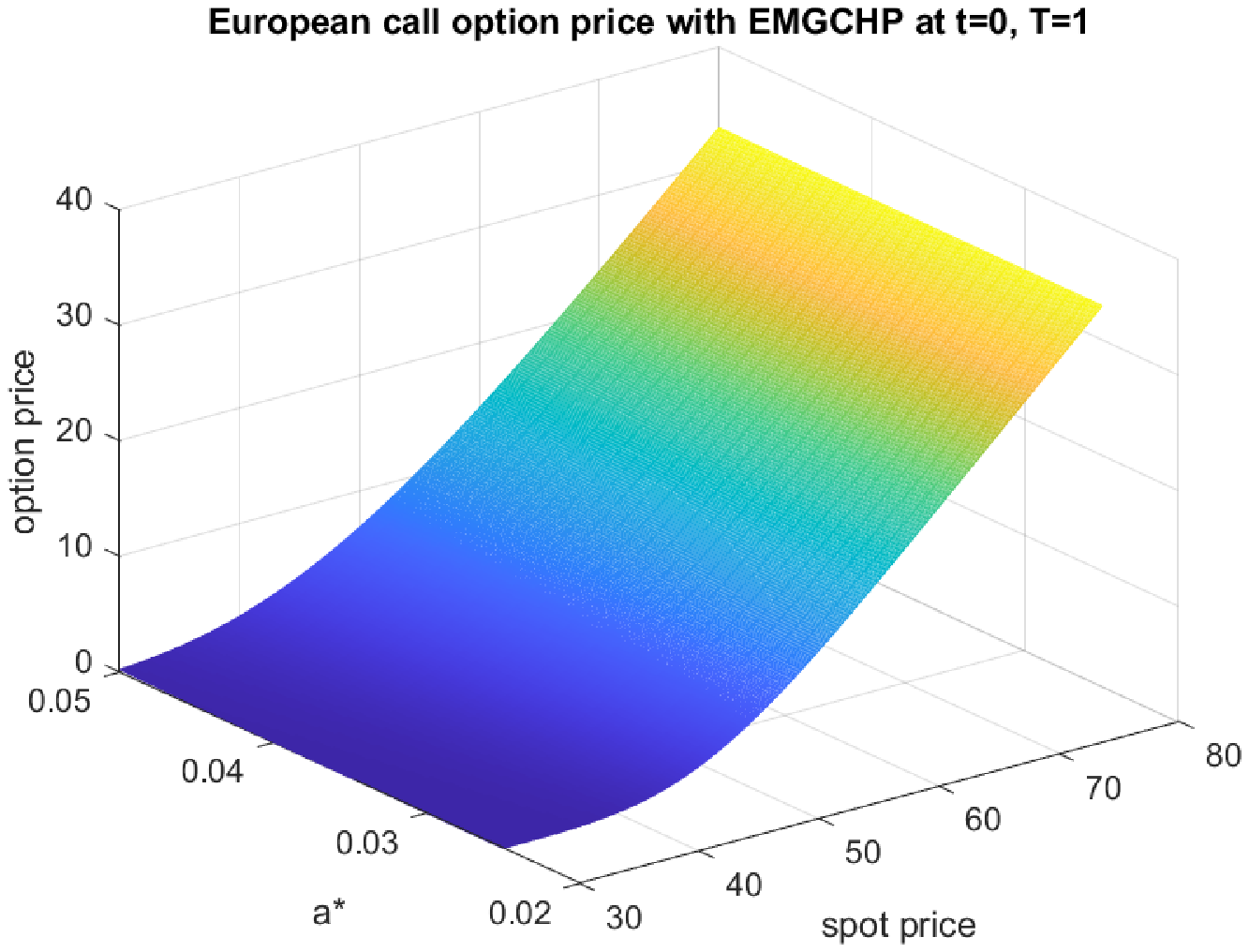}
	\end{minipage}
	\begin{minipage}{0.4\textwidth}
		\includegraphics[width=\linewidth]{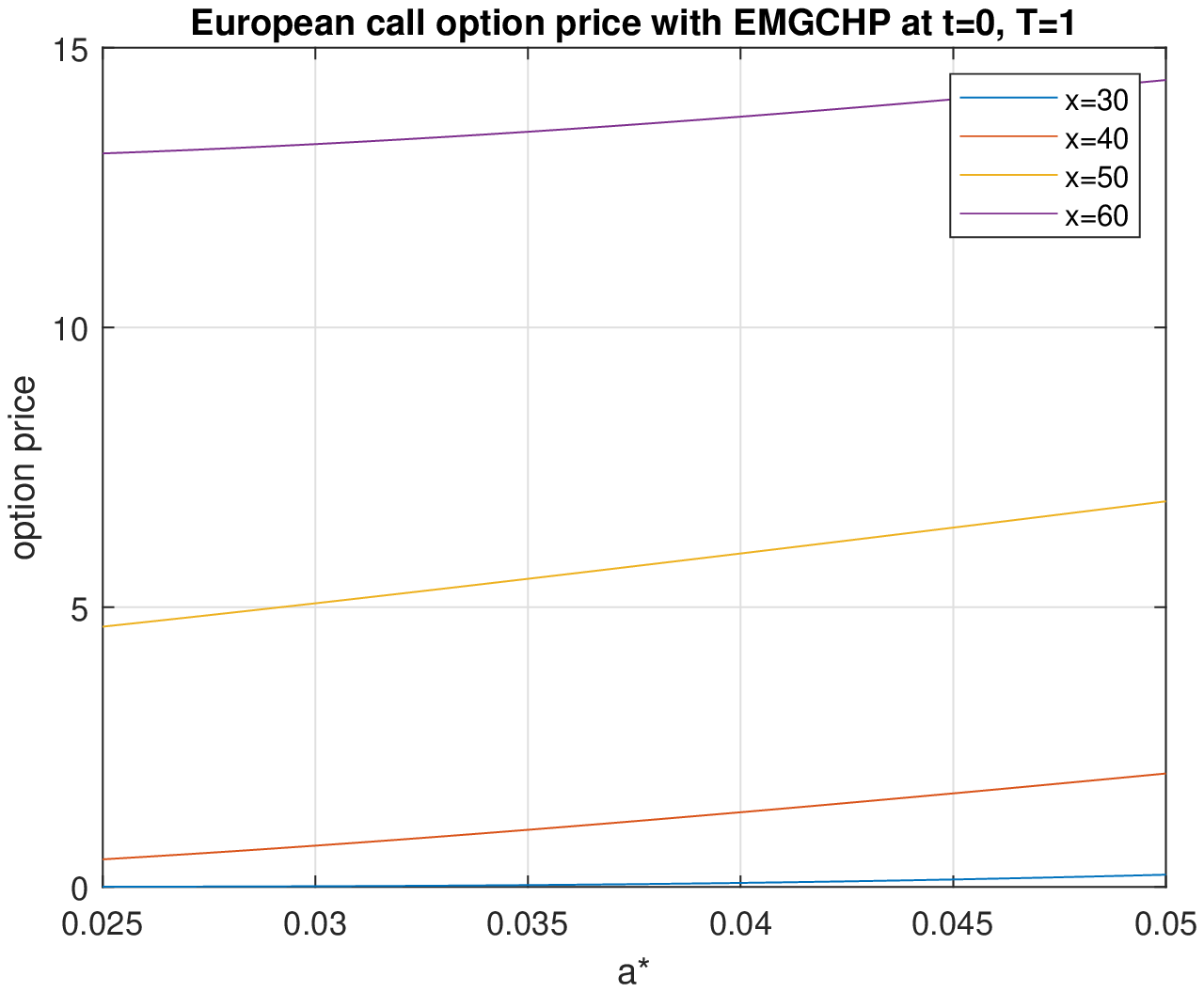}
	\end{minipage}
	\caption{Hawkes-based European call option with $\sigma^* \in [0.01,0.2]$ and $a^* \in [0.025,0.05]$.}
	\label{option_ST_sig}
\end{figure}

%%%%%%%%%%%%%Implied Order Flow to be added

%%%%%%%%%%%%%%%%%

\subsection{Greeks in Hawkes-based Black–Scholes Model}

Greeks are partial derivatives of the option price $c(t,x)$. They denote the sensitivity of the option price to a change in underlying parameters. For example, delta $c_x(t,x)$ is the partial derivative with respect to the spot price $A(t) = x$ and theta  $c_t(t,x)$ is the partial derivative with respect to time $t$. We can also compute the corresponding Greeks for our Hawkes-based option pricing model. 

The delta for the Hawkes-based Black–Scholes model is in the form of 
\[c_x(t,x) = \Phi(d_{+}(T-t),x)\] 
and the theta is defined as
\[
c_t(t,x) = -rKe^{-r(T-t)}\Phi (d_{-}(T-t,x)) - \frac{x}{2\sqrt{T-t}} \phi (d_{+}(T-t,x)) \sqrt{n{\sigma^*}^2 \frac{\lambda_\infty}{1-\alpha/\beta} + n{a^*}^2 \frac{\lambda_{\infty}}{(1-\alpha/\beta)^3}}.
\] 
Since $\Phi$ and $\phi$ are always non-negative,delta and theta are always positive. Other classical Greeks are same as the Greeks in the classical Black-Scholes theory. More discussions can be found in John Hull's book \cite{Hul2012}. Here we conclude some special Greeks in the Hawkes-based model. 

$c_{\sigma^*}$ and $c_{a^*}$ are partial derivatives to measure the sensitivity of the Markov chain (the jump size of stock price).  
\begin{equation}
	\begin{aligned}
		c_{\sigma^*} (T-\tau,x)= x\phi (d_+(\tau,x)) F_+(\tau,x) - Ke^{-r\tau}\phi(d_-(\tau,x))F_-(\tau,x), 
	\end{aligned}
\end{equation}
\begin{equation}
	\begin{aligned}
		c_{a^*} (T-\tau,x)= x\phi (d_+(\tau,x)) G_+(\tau,x) - Ke^{-r\tau}\phi(d_-(\tau,x))G_-(\tau,x), 
	\end{aligned}
\end{equation}
where
\begin{equation}
	\begin{aligned}
		F_\pm (\tau,x) = - \left(1  +  \frac{{a^*}^2}{{\sigma^*}^2} \frac{1}{(1-\alpha/\beta)^2}\right)^{-1}d_\pm(\tau,x)  \pm \left(1  +  \frac{{a^*}^2}{{\sigma^*}^2} \frac{1}{(1-\alpha/\beta)^2}\right)^{-1/2}     \left(\tau n\frac{\lambda_\infty}{1-\alpha/\beta}\right)^{1/2}
	\end{aligned},
\end{equation}
and
\begin{equation}
	\begin{aligned}
		G_\pm (\tau,x) = - \left(1  + \frac{{\sigma^*}^2}{{a^*}^2} (1-\alpha/\beta)^2\right)^{-1}d_\pm(\tau,x)  \pm \left(1  +  \frac{{\sigma^*}^2}{{a^*}^2} (1-\alpha/\beta)^2\right)^{-1/2}     \left(\tau n\frac{\lambda_\infty}{(1-\alpha/\beta)^3}\right)^{1/2}
	\end{aligned},
\end{equation}
$d_\pm$ is given in equation (\ref{d_pm}) and $\phi$ is the standard normal probability density function. Let $\alpha/\beta = \hat{\mu}$, we can compute the Greek $C_{\hat{\mu}}$ to measure the parameters in Hawkes process 
\begin{equation}
	\begin{aligned}
		c_{ \hat{\mu}} (T-\tau,x)= x\phi (d_+(\tau,x)) H_+(\tau,x) - Ke^{-r\tau}\phi(d_-(\tau,x))H_-(\tau,x), 
	\end{aligned}
\end{equation}

\begin{equation}
	\begin{aligned}
		H_\pm (\tau,x) =& - \frac{1}{2}   \left[\left(1-\hat{\mu}+\frac{{a^*}^2}{1-\hat{\mu}}\right)^{-1}  + 3\left( \frac{{\sigma^*}^2}{{a^*}^2}(1-\hat{\mu})^3+1-\hat{\mu}\right)^{-1}  \right]d_\pm(\tau,x)  \\& \pm \frac{1}{2} \left(\tau n{\lambda_\infty}\right)^{1/2} \left(\frac{{\sigma^*}^2}{1-\hat{\mu}} +  \frac{3{a^*}^2}{(1-\hat{\mu})^3} \right)\left(  {\sigma^*}^2(1-\hat{\mu})  + \frac{{a^*}^2}{1-\hat{\mu}}  \right)^{-1/2}    
	\end{aligned}.
\end{equation}

%%%%%%%%%%%%%%%%%%%%%%%

\subsection{Implied Volatility and Implied Order Flow}

In this section, we discuss the implied volatility and implied order flow. It reveals the relationship between stock volatility and the order flow in the limit order book system. In this way, the Hawkes-based model can provide more market forecast information than the classical Black-Scholes model. The stock price dynamics modelled by the EMGCHP 

\begin{equation} \label{sde_gbm_2}
\begin{aligned}
dA_{nt} &= A_{nt} \left(  a^*n \frac{\lambda_\infty}{1-\alpha/\beta} + n{\sigma^*}^2 \frac{\lambda_\infty}{2(1-\alpha/\beta)} + n{a^*}^2 \frac{\lambda_{\infty}}{2(1-\alpha/\beta)^3}\right) dt \\& + A_{nt} \sqrt{n{\sigma^*}^2 \frac{\lambda_\infty}{1-\alpha/\beta} + n{a^*}^2 \frac{\lambda_{\infty}}{(1-\alpha/\beta)^3}} dW_t 
\end{aligned}
\end{equation}
is a geometric Brownian motion with volatility
\begin{equation}\label{vol_dy1}
\hat{\sigma} =  \sqrt{{\sigma^*}^2 \frac{\lambda_\infty}{1-\alpha/\beta} + {a^*}^2 \frac{\lambda_{\infty}}{(1-\alpha/\beta)^3}}.
\end{equation}

According to the classical Black-Scholes theory, we can compute the implied volatility $\hat{\sigma}_{implied}$ for the EMGCHP model as well. The idea is the option valuation formula 
\begin{equation}
\begin{aligned}
c(t,x)=x \Phi(d_+(T-t,x)) - Ke^{-r(T-t)}\Phi(d_-(T-t,x)), \,\,\, 0\leq t <T,\,\,\,x>0,
\end{aligned}
\end{equation}
where 
\begin{equation}
d_{\pm}(\tau,x) = \frac{1}{\hat{\sigma}} \left[\log \frac{x}{K} +  \left(r\pm  (\frac{1}{2}\hat{\sigma}^2)\right)\tau  \right],
\end{equation}
provides a relationship between option price $c$, time to expiry $\tau = T-t$ (or time $t$), spot price $x$, strike price $K$, interest rate $r$, and volatility $\hat{\sigma}$, namely 
\[c = f(\tau, S, K, \hat{\sigma},r)=x \Phi(d_+(T-t,x)) - Ke^{-r(T-t)}\Phi(d_-(T-t,x)).\]

Then, given $c$, $S$, $\tau$, $K$, and $r$ we can always solve $c = f(\tau, S, K, \hat{\sigma},r)$ to find the implied volatility $\hat{\sigma}_{implied}$. Since $f(\cdot)$ is an implicit equation with respect to $\hat{\sigma}$, we cannot simply rearrange it to isolate and solve for $\hat{\sigma}$. Some numerical methods would be applied here, such as bisection, fixed-point iteration, and Newton-Raphson method.

If we only focus on the implied volatility $\hat{\sigma}_{implied}$, everything here (implied volatility surface and volatility smile) would be the same as the classical Black-Scholes model. Because all quantities ($c$, $S$, $\tau$, $K$, and $r$) and formulas (the option valuation formula) are same. However, the Hawkes-based model reveals more interesting dynamics about the stock volatility and order flow.

Recall equation (\ref{vol_dy1}), the left-hand side $\hat{\sigma}$ is the stock price volatility. On the right-hand side, $\sigma^*$ and $a^*$ are quantities for the changing size of stock price; $\lambda_{\infty}$, $\alpha/\beta$ are quantities to describe the order flow. If we can compute the implied volatility $\hat{\sigma}_{implied}$ via the option valuation formula and have some estimation of  $\sigma^*$ and $a^*$, we can derive a relationship between order flow and the implied volatility   
\begin{equation}\label{vol_dy2}
\hat{\sigma}_{implied} =  \sqrt{{\sigma^*}^2 \frac{\lambda_\infty}{1-\alpha/\beta} + {a^*}^2 \frac{\lambda_{\infty}}{(1-\alpha/\beta)^3}}.
\end{equation}
Here $\frac{\lambda_\infty}{1-\alpha/\beta}$ is the expectation of the order flow $E(N_t)$ and $\frac{\lambda_{\infty}}{(1-\alpha/\beta)^3}$ is an approximation of the variance of the order flow $Var(N_t)$. So we can rewrite (\ref{vol_dy2}) in the form of 
\begin{equation}\label{vol_dy3}
\hat{\sigma}_{implied}^2 =  {\sigma^*}^2 E(N_t)  + {a^*}^2 Var(N_t).
\end{equation}

Moreover, if we have some estimation about $\alpha/\beta$ (from the historical data, estimation of the $\lambda_{\infty}$ can also work), we can compute the implied expectation and variance of the order flow in the form of
\begin{equation}
E(N_t)_{implied} = \hat{\sigma}_{implied}^2  \left( {\sigma^*}^2 + \frac{{a^*}^2}{(1-\alpha/\beta)^2}\right)^{-1},
\end{equation}
\begin{equation}
Var(N_t)_{implied} = \hat{\sigma}_{implied}^2  \left( {\sigma^*}^2(1-\alpha/\beta)^2 + {a^*}^2\right)^{-1}.
\end{equation}
We call $E(N_t)_{implied}$ and $Var(N_t)_{implied} $ implied order flow expectation and implied order flow variance. Similar as the implied volatility, implied order flow expectation and variance are forward-looking and subjective measures for orders. They reflect the market's forecast of a likely movement for orders in the limit order books. Simulation examples can be found in Figure \ref{implied_orderflow} and Figure \ref{implied_orderflowva}. Since we don't have actual high frequency option data, we applied the option values computed from Figure \ref{option_ST_h} to calculate the implied volatility and set $\alpha = 0.7$, $\beta = 1$, $K=50$, spot price $S = 55$.

\begin{figure}[H]
	\centering
	\includegraphics[width=10cm]{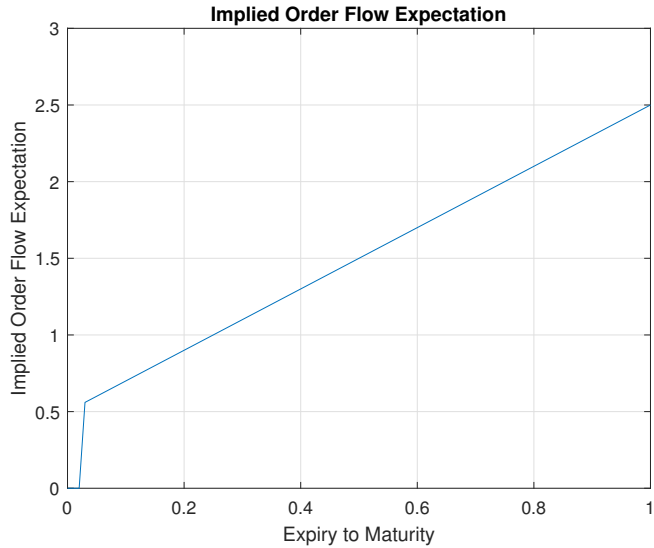}
	\caption{Implied order flow expectation.}
	\label{implied_orderflow}
\end{figure}

\begin{figure}[H]
	\centering
	\includegraphics[width=10cm]{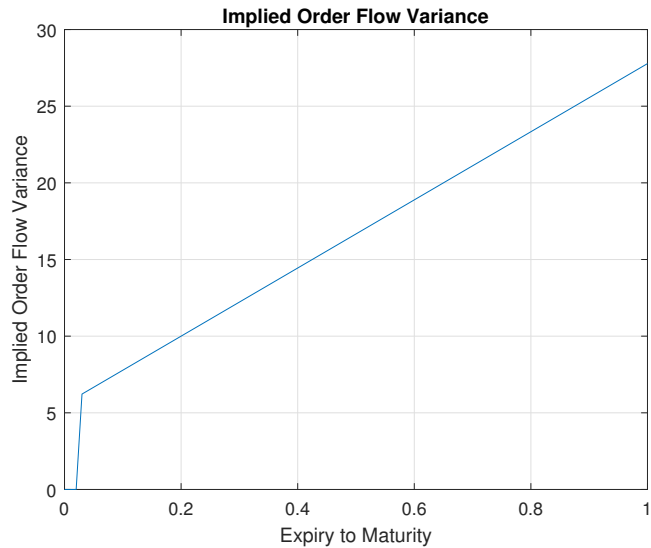}
	\caption{Implied order flow variance.}
	\label{implied_orderflowva}
\end{figure}

%%%%%%%%%%%%%%%%%%%%%%%%%%%%%%
%%%%%%%%%%%%%%%%%%%
\section{Margrabe's Spread Options Valuations with Hawkes-based Models for Two Assets}

There are many kinds of spread options. For instance, in the foreign exchange market 
(spread involves rates in different countries); in the fixed income market (spread between different maturities, e.g., in United States of America, it is Notes-Bonds (NOB) spread); there is also spread between quality levels, e.g., Treasury Bills-EuroDollars (TED spread); in the agricultural futures markets (e.g., soybean complex spread and corn spread); in the energy markets, there are crack spread options (e.g., gasoline crack spread, heating oil crack spread) or spark spread-converting a specific fuel (e.g., natural gas) into electricity. 
We can find this primary cross-commodity transaction in electric energy markets. For more details, see \cite{Carmona}. Note that these spread options are particular cases of exchange option \cite{Rem2013}, Section 2.4.3, where all formulas are given. There is no need to re-derive them.  See also quanto options in 2.4.4 and examples there, e.g., option paying in a foreign currency, etc.  In the same book, section 2.5 deals with Greeks in a multivariate setting

\subsection{Recap: Margrabe's Spread Options Valuations}

Margrabe's approach (\cite{M}) to spread option valuation is to price
\begin{equation}
e^{-rT}{\rm \bf E}^Q[\max(X_T-Y_T,0)],
\label{eq1}
\end{equation}
where $X_t$ and $Y_t$ are two assets for spot prices under the risk-neutral measure, and they are modelled by two geometric Brownian motions (GBMs):
\begin{equation}
\begin{array}{rcl}
dX_t&=&rX_tdt+\sigma_XX_tdB_t^X\\
dY_t&=&rY_tdt+\sigma_YY_tdB_t^Y.\\
\end{array}
\label{eq2}
\end{equation}

We denote by $r$ the interest rate and $T$ a maturity, and the correlation between the two Brownian motions $B_t^X$ and $B_t^Y$ is a constant $\rho.$
The change of numeraire is the most efficient method to solve this problem. Let us take $Y_t$ as the numeraire. Then, in the $Y$ measure, ${\rm \bf E}^Q_Y,$ all assets discounted by $Y$ must be martingales. Let $V(t,X_t,Y_t)$ be the value of the option at time $t.$ Since $V$ must be an ${\rm \bf E}^Q_Y$ martingale, we have:
$$
\frac{V(0,X_0,Y_0)}{Y_0}={\rm \bf E}^Q_Y\left[\frac{V(T,X_T,Y_T)}{Y_T}\right],
$$
from which we derive
\begin{equation}
V(0,X_0,Y_0)=Y_0 {\rm \bf E}^Q_Y\left[\max\left(\frac{X_T}{Y_T}-1,0\right)\right].
\label{eq3}
\end{equation}

We note that
\begin{equation}
\frac{X_T}{Y_T}=\frac{X_0}{Y_0}e^{\sigma_XB_T^X-\sigma_YB_T^Y+({\rm It\hat{o}\, drift\, term})}.
\label{eq4}
\end{equation}

We also note that the variance $\sigma^2_{exp}$ for the exponent in (\ref{eq4}) is:
\begin{equation}
\begin{array}{rcl}
\sigma^2_{exp}&=&Var[\sigma_XB_T^X-\sigma_YB_T^Y+({\rm It\hat{o}\, drift\, term})]\\
&=&Var[\sigma_XB_T^X-\sigma_YB_T^Y]\\
&=&T(\sigma^2_X+\sigma^2_Y-2\rho\sigma_X\sigma_Y).\\
\end{array}
\label{eq5}
\end{equation}

Since $ \frac{X_T}{Y_T}$ is log-normal distributed with variance $\sigma^2_{exp},$ we have:
$$
{\rm \bf E}\left[\max\left(\frac{X_T}{Y_T}-K,0\right)\right]={\rm \bf E}\left[\frac{X_T}{Y_T}\right]N(d_1)-KN(d_2),
$$
where $K$ is a strike price, and
\begin{equation}
d_{1,2}=\frac{\log[E(X_T/Y_T)/K\pm\frac{1}{2}\sigma^2_{exp}]}{\sigma_{exp}}.
\label{eq6}
\end{equation}

For notational convenience, we denote $d_{1,2}=d_1$ for the ``plus'' sign, and $d_{1,2}=d_2$ for the ``minus'' sign.

Collecting Equations~(\ref{eq3})--(\ref{eq6}), we finally have:
\begin{equation}
V(0,X_0,Y_0)=Y_0\left[{\rm \bf E}^Q_Y\left(\frac{X_T}{Y_T}N(d_1)-N(d_2)\right)\right]=X_0N(d_1)-Y_0N(d_2),
\label{eq7}
\end{equation}
where
\begin{equation}
d_{1,2}=\frac{\log(X_0/Y_0)\pm\frac{1}{2}\sigma^2_{exp}}{\sigma_{exp}}.
\label{eq8}
\end{equation}

We observe that there is no discounting because funding is embedded in the asset $Y.$

Remark. If we compare our result 
 with the Black--Scholes formula, the value of the option of a call in \mbox{Equation~(\ref{eq7})} is similar to BS as if we used BS with $X$ as the underlying asset and $Y_0$ as the strike price, with volatility $\sigma_{exp}.$ We note that $\sigma^2_{exp}$ is the variance of $\log(X_T/Y_T).$

\subsection{Dynamics for Two Hawkes-based Assets}

 We suppose that risk-free asset $B_t$ and two risky assets $S_i(t)$ follow the following dynamics, respectively (see \cite{S}):
$$
\left\{
\begin{array}{rcl}
B(t)&=&B_0\exp\{rt\}\\
S_i(t)&=&S_i(0)\exp\{G_i(t)\},\\
\end{array}
\right.
$$
where $G_i(t):=\sum_{k=1}^{N_i(t)}a_i(X^i_k)$ are two GCHP, $i=1,2,$ $N_i(t)$ are two independent Hawkes processes with background intensities $\lambda_i$ and self-exciting functions $\mu_i(t),$ $X^i_k$ are two independent discrete-time finite or infinite state Markov chains with state space $X=\{1,2,3,...,N\}$ or $X=\{1,2,...,N,...\},$ respectively, $r>0$ is the interest rate.

$G_i(t)$ can be approximated as (see )
 $$
 G_i(t)\approx a_i^*\frac{\lambda_i}{1-\hat\mu_i}t+\bar\sigma_i W_i(t),
  $$
where $\hat\mu_i:=\int_0^{+\infty}\mu_i(s)ds<1,$ $a_i^*$ is an average of $a(_ix)$ over stationary distributions of MCs $X^i_k,$ $\lambda_i$ is a background intensity, $\bar\sigma_i>0$ is defined as follows 

$$
\bar\sigma_i=\sqrt{(\sigma_i^*)^2+\Big(a_i^*\sqrt{\frac{\lambda_i}{(1-\hat\mu_i)^3}}\Big)^2},
$$
where $\sigma_i^*=\sigma_i\sqrt{\lambda_i/(1-\hat\mu_i)}.$  We note, that for exponential decaying intensity $\hat\mu_i=\alpha_i/\beta_i.$

Thus, by the Theorem 2 (see \cite{S})  $S_i(t)$ can be approximated by the pure diffusion process:
$$
S_i(t)\approx S_i(0)+a^*\frac{\lambda_i}{1-\hat\mu_i}t+\bar\sigma_i W_i(t),
$$
where $W_i(t)$ are two standard Wiener process with correlation coefficient $\rho.$ 

Therefore, $S_i(t)$ can be presented in the following way:
$$
S_i(t)=S_i(0)e^{a_i^*\frac{\lambda_i}{1-\hat\mu_i}t+\bar\sigma_i W_i(t)}.
$$
Using It$\hat o$ formula we can get:
$$
dS(_it)=S_i(t)[(a_i^*\frac{\lambda_i}{1-\hat\mu_i}+\frac{\bar\sigma_i^2}{2})dt+\bar\sigma_i dW_i(t)].
$$
We note, that in risk-neutral world the dynamics for two Hawkes-based assets is
$$
dS(_it)=S_i(t)[rdt+\bar\sigma_i d\hat W_i(t)],
$$
where $\hat W_i(t)$ are two Girsanov's Wiener processes.

\subsection{Margrabe's Spread Options Valuations with Hawkes-based Models for Two Assets}

According to Section 6.2, Margrabe's spread option valuation formula for two Hawkes-based assets in $S_i(t)$ has the following form
$$
V(0,S_1(0),S_2(0))=S_1(0)N(d_1)-S_2(0)N(d_2),
$$ 
where
$$
d_{1,2}=\frac{\ln(S_1(0)/S_2(0))\pm\frac{1}{2}\bar\sigma^2_{exp}}{\bar\sigma_{exp}},
$$
and
$$
\bar\sigma_{exp}^2=Var(\bar\sigma_1\hat W_i(T)-\bar\sigma_2\hat W_2(T))=T[\bar\sigma_1^2+\bar\sigma_2^2-2\rho\bar\sigma_1\bar\sigma_2].
$$
Here, $N(x)$ is the normal distribution function.

\subsection{Numerical Example}

Below we present a figure for spread option of two $1D$ EMGCHP. 

For $1D$ case parameters are:

alpha1 = 0.7;
beta1 = 1;
alpha2 = 0.5 ;
beta2 = 1;
lambda1 = 0.75;
sig1 = 0.05;
a1 = 0.03;
lambda2 = 1.5;
sig2 = 0.05;
a2 = 0.01;
S1 = 30;
S2 = 20;
T = 1:1:100;
rho = 0:0.01:1;

\begin{figure}[H]
	\centering
	\includegraphics[width=10cm]{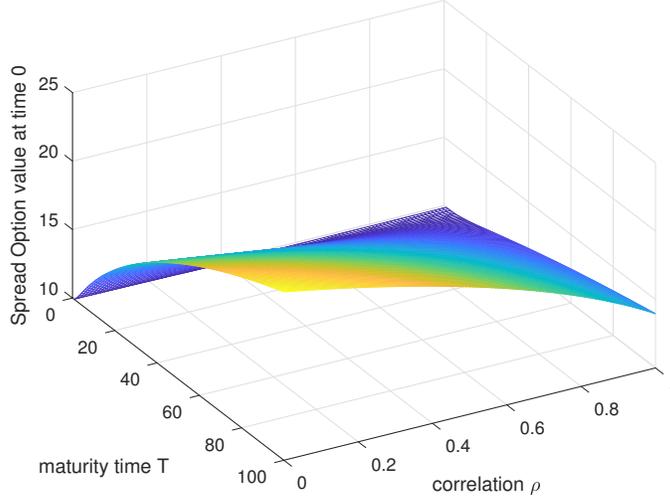}
	\caption{ Two $1D$ EMGCHP spread option.}
	\label{option_Spread_1d}
\end{figure}

%%%%%%%%%%%%%%%%%%%
%%%%%%%%%%%%%%%%%%%%%%

\section{Spread Option Pricing with the Two-dimensional \\EMGCHP}

We consider the spread option pricing with a two-dimensional EMGCHP. Let $\vec{A}_t = (A_{1,t}, A_{2,t})$ be a two-dimensional EMGCHP. $A_{1,t}$ and $A_{2,t}$ denote two stock price. Let $\vec{N}_t = (N_{1,t}, N_{2,t})$ be a two-dimensional Hawkes process in the EMGCHP. Recall the FCLT II approximation, we have 
\begin{equation}
\vec{A}_{nt}  \sim \exp \left[ \sqrt{n} \Tilde{\sigma}^* \boldsymbol{\Sigma^{1/2}} \Vec{W}_1(t) + \sqrt{n} \Tilde{a}^* (\mathbf{I}-\mathbf{K})^{-1} \boldsymbol{\Sigma^{1/2}} \Vec{W}_2(t) +   \Tilde{a}^* n  (\mathbf{I}-\mathbf{K})^{-1}\vec{\lambda}_\infty)t \right], 
\end{equation} 
for all $t>0$ and some large enough $n$. Here, $\Vec{W}_1(t)$ and $\Vec{W}_2(t)$ are two independent standard $2$-dimensional Brownian motions. $\sqrt{n} \Tilde{\sigma}^* \boldsymbol{\Sigma^{1/2}}$ and $\sqrt{n} \Tilde{a}^* (\mathbf{I}-\mathbf{K})^{-1} \boldsymbol{\Sigma^{1/2}}$ are $2\times 2$ matrices.

Let $C$ be a concatenation of matrices $\sqrt{n} \Tilde{\sigma}^* \boldsymbol{\Sigma^{1/2}}$ and $\sqrt{n} \Tilde{a}^* (\mathbf{I}-\mathbf{K})^{-1} \boldsymbol{\Sigma^{1/2}}$ in the form of  
\[C = \left[ \sqrt{n} \Tilde{\sigma}^* \boldsymbol{\Sigma^{1/2}},  \sqrt{n} \Tilde{a}^* (\mathbf{I}-\mathbf{K})^{-1} \boldsymbol{\Sigma^{1/2}} \right] \in \mathbb{R}^{2 \times 4} \] 
and $\vec{W}_t$ be a $4$-dimensional standard Brownian motion, then 
\[\vec{A}_t = \exp \left[ { C\vec{W}_t + \Tilde{a}^* n  ((\mathbf{I}-\mathbf{K})^{-1}\vec{\lambda}_\infty)t} \right] .\]
By applying the multivariate It\^{o} formula, we derive the price process in the form a multivariate geometric Brownian motion  
\begin{equation}
dA_{i,t} = A_{i,t} \left\{(  a^*_in  ((\mathbf{I}-\mathbf{K})^{-1}\vec{\lambda}_\infty)_i + \frac{1}{2}(CC')_{ii} ) dt +  C_i d\vec{W}_t \right\},
\end{equation} 
where $C_i$ is the $i$th row of $C$.    

To simplify notations, we let the drift for $i$th asset $ a^*_in  ((\mathbf{I}-\mathbf{K})^{-1}\vec{\lambda}_\infty)_i + \frac{1}{2}(CC')_{ii} = D_i$ and $\vec{W}_t = ({W}^1_t, {W}^2_t, {W}^3_t,{W}^4_t)'$. Then, price processes for two stocks are in the form of

\begin{equation}\label{asstes_1}
	\begin{cases}
		dA_{1,t} = A_{1,t} \left(D_1dt +  C_{11}d{W}^1_t + C_{12}d{W}^2_t +C_{13}d{W}^3_t+C_{14}d{W}^4_t\right)\\
			dA_{2,t} = A_{2,t} \left(D_2dt +  C_{21}d{W}^1_t + C_{22}d{W}^2_t +C_{23}d{W}^3_t+C_{24}d{W}^4_t\right)
	\end{cases},
\end{equation}
where $C_{ij}$ is the $i,j$th element of matrix $C$. Let  
\[
\tilde{C}_1 = \sqrt{C_{11}^2+C_{12}^2+C_{13}^2+C_{14}^2} \,\,\,,
\tilde{C}_2 = \sqrt{C_{21}^2+C_{22}^2+C_{23}^2+C_{24}^2}
\]
then
\[
  dB^1_t = \frac{1}{\tilde{C}_1}  \left(C_{11}d{W}^1_t + C_{12}d{W}^2_t +C_{13}d{W}^3_t+C_{14}d{W}^4_t\right)  
\]
and 
\[
dB^2_t = \frac{1}{\tilde{C}_2}  \left(C_{21}d{W}^1_t + C_{22}d{W}^2_t +C_{23}d{W}^3_t+C_{24}d{W}^4_t\right)  
\]
are standard Brownian motions with correlation
\begin{equation}\label{correlation1}
	\rho = dB^1_tdB^2_t = \frac{1}{\tilde{C}_2\tilde{C}_1}\left(C_{11}C_{21}+ C_{12}C_{22}+C_{131}C_{23}+C_{14}C_{24}\right).
\end{equation}
\begin{rem}
	Different with two $1$-dimensional models, the correlation $\rho$ for the two-dimensional EMGCHP model were given by the multivariate Hawkes process, namely $\rho$ is defined by matrix $C$ and $C$ can be computed by parameters of the Hawkes process and Markov chains.  
\end{rem}
Then, we can rewrite equation (\ref{asstes_1}) into the form of
\begin{equation}\label{asstes_2}
\begin{cases}
dA_{1,t} = A_{1,t} \left(D_1dt +  \tilde{C}_1 dB_t^1\right)\\
dA_{2,t} = A_{2,t} \left(D_2dt +  \tilde{C}_2 dB_t^2\right)
\end{cases},
\end{equation}
where $D_i$, $\tilde{C}_1$, and correlation between $B_t^1$ and $B_t^2$ are defined before. Then, we can apply the Margrabe' s formula for spread option pricing. Let $T$ denotes the maturity time, $A_1(0)$ and $A_2(0)$ are initial prices for two stocks, the spread option value at time $0$ is in the form of
\[
V(A_1(0), A_2(0), T, 0) = A_1(0)N(d_+) - A_2(0)N(d_-),
\]
where
$N$ denotes the cumulative distribution function for a standard normal distribution. $d_{\pm}$ is given by
\[
d_{\pm} = \frac{\ln(\frac{A_1(0)}{A_2(0)}) + \frac{T\sigma_{exp}^2}{2}}{\sigma_{exp}\sqrt{T}}
\]
and
\[
\sigma_{exp} = \sqrt{\tilde{C}_1^2+\tilde{C}_2^2 - 2\rho \tilde{C}_1\tilde{C}_2},
\]
where $\rho$ is given by (\ref{correlation1}).

%%%%%%%%%%%%%%%%%%%
%%%%%%%%%%%%%%%%%%%%%%%%%%
%%%%%%%%%%%%%%%%%%%%%%%%%%%%%%

\subsection{Numerical Example}

Below we present a figures for  spread option of one $2D$ EMGCHP.  

For $2D$ case parameters are:

alpha = [115.7317,0.4492;0.0218,123.2703];
beta =  [280.9249,2.9611;0.0669,307.2993];
lambda = [0.0545, 0.0593]';
a = diag([0.03,0.04]);
sig = diag([0.05,0.03]);
S1 = 30;
S2 = 20;
T = 1:1:100;

\begin{figure}[H]
	\centering
	\includegraphics[width=10cm]{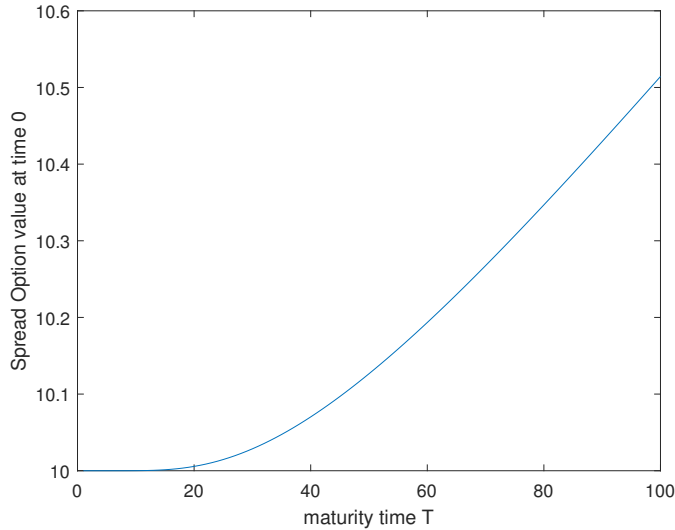}
	\caption{ One $2D$ EMGCHP spread option.}
	\label{option_ST_spread_2d}
\end{figure}

%%%%%%%%%%
%%%%%%%%%%%%%
%%%%%%%%%%%%%%%%%%%%%%%%%%%%%%%%%%%
\section{Basket Options Pricing}
In this section, we consider basket options pricing for MDCHP.
(See \cite{KKM}, \cite{Gen1993}).

\subsection{Basket Option Valuations with Multivariate Hawkes-based Models}

We consider a $d$-dimensional EMGCHP to denote $d$ stocks prices. With the FCLT approximation, price dynamics can be written as
\begin{equation}
	dA_{i,t} = A_{i,t} \left\{(  a^*_in  ((\mathbf{I}-\mathbf{K})^{-1}\vec{\lambda}_\infty)_i + \frac{1}{2}(CC')_{ii} ) dt +  C_i d\vec{W}_t \right\},
\end{equation} 
where $C_i$ is the $i$th row of $C$ defined before and $\vec{W}_t$ is a $2d$-dimensional Brownian motion. Similar to the Section 6, we can simplify notations by letting the drift for $i$th asset $ a^*_in  ((\mathbf{I}-\mathbf{K})^{-1}\vec{\lambda}_\infty)_i + \frac{1}{2}(CC')_{ii} = D_i$ and $\vec{W}_t = ({W}^1_t, {W}^2_t, \cdots,{W}^{2d}_t)'$. Then, the $i$th price process is in the form of

\begin{equation}\label{dasstes_1}
		dA_{i,t} = A_{i,t} \left(D_idt +\sum_{j=1}^{2d}C_{ij}d{W}^j_t  \right)
\end{equation}
where $C_{ij}$ is the $i,j$th element of matrix $C$. Let  
\[
\tilde{C}_i =  \left( \sum_{j=1}^{2d}C_{ij}^2 \right)^{1/2}
\]
then
\[
dB^i_t = \frac{1}{\tilde{C}_i}  \left(\sum_{j=1}^{2d}C_{ij}d{W}^j_t \right)  
\]
are standard Brownian motions. And for any $i\neq k$, the correlation between $B^i_t$ and $B^k_t$ is in the form of
\begin{equation}\label{correlation}
	\rho_{ik} = dB^i_tdB^k_t = \frac{1}{\tilde{C}_i\tilde{C}_k}\left(  \sum_{j=1}^{2d}C_{ij}C_{kj} \right).
\end{equation}
Then, we can rewrite equation (\ref{dasstes_1}) into the form of
\begin{equation}\label{dasstes_2}
		dA_{i,t} = A_{i,t} \left(D_idt +  \tilde{C}_i dB_t^i\right).
\end{equation}
By applying the Girsanov theorem, we can rewrite (\ref{dasstes_2}) as
\begin{equation}\label{dasstes_3}
	d \tilde{A}_{i,t} = A_{i,t}rdt +   A_{i,t} \tilde{C}_i d{\tilde{B}}_t^i,
\end{equation}
where ${\tilde{B}}_t^i$ is a Brownian motion under the risk-neutral measure and $r$ is the interest rate.

Next, we can consider the basket option valuation under the price dynamic (\ref{dasstes_3}). Define the price of a basket of stocks by
\[
A_{basket}(T) = \sum_{i=1}^{d} \omega_{i} \tilde{A}_{i,T}.
\]
where $\omega_i$ are weights for each stock and $\sum\omega_i=1$. Then the payoff of the basket option at maturity time $T$ is in the form of
\begin{equation}\label{basketopt}
	P_{\text {Basket }}(A_{basket}(T), K, \theta)= \left[\theta(  A_{basket}(T)-K)\right]^{+},
\end{equation}
where $K$ is the strike price. When $\theta=1$, (\ref{basketopt}) is a call option and when $\theta=-1$, (\ref{basketopt}) denotes a put option.

Rewriting the payoff as

\begin{equation}\label{basketopt2}
	\begin{aligned}
			P_{\text {Basket }}(\sum_{i=1}^{d}\omega_i\tilde{A}_{i,T}, K, \theta) &= \left[\theta(  \sum_{i=1}^{d}\omega_i\tilde{A}_{i,T}-K)\right]^{+} \\& = \left[\theta\left(\left(\sum_{i=1}^{d} w_{i} \tilde{A}_{i,0}e^{rT}\right) \sum_{i=1}^{d} \tilde{\omega}_{i} \tilde{A}_{i,T}^{*}-K\right)\right]^{+}, 
	\end{aligned}
\end{equation}
where 
\[
\tilde{\omega}_{i} = \frac{\omega_{i} \tilde{A}_{i,0}e^{rT}}{\sum_{i=1}^{d} \omega_{i} \tilde{A}_{i,0}e^{rT}} \,\,\,\,\,\,\mathrm{and} \,\,\,\,\,\,\tilde{A}_{i}^{*}(T) = \frac{\tilde{A}_{i,T}^{*}}{\tilde{A}_{i,0}e^{rT}}.
\]
Then, we apply geometric average approximation method proposed by \cite{Gen1993}, namely approximate $\left(\sum_{i=1}^{d} \omega_{i} \tilde{A}_{i,0}e^{rT}\right)\sum_{i=1}^{d} \tilde{\omega}_{i} \tilde{A}_{i,T}^{*}$ by
\begin{equation}
	A_{geometric}(T)=\left(\sum_{i=1}^{d} \omega_{i} \tilde{A}_{i,0}e^{rT}\right) \prod_{i=1}^{d}\left(\tilde{A}_{i,T}^{*}\right)^{\tilde{\omega}_{i}}.
\end{equation}

We note that $A_{geometric}(T)$ follows the log-normally distribution. In this way, the basket option can be valuated by the classical Black-Scholes method 

\begin{equation}
	V_{\text {Basket }}(T)=e^{-rT}	 \theta\left(e^{\tilde{m}+\frac{1}{2} \tilde{v}^{2}} \Phi\left(\theta d_{1}\right)-K^{*} \Phi\left(\theta d_{2}\right)\right),
\end{equation}
where
\begin{equation}
	\begin{aligned}
		d_{1} &=\frac{\tilde{m}-\log K^{*}+\tilde{v}^{2}}{\tilde{v}}, \\
		d_{2} &=d_{1}-\tilde{v}, \\
		\tilde{m} &=\log \left(\sum_{i=1}^{d} \omega_{i} \tilde{A}_{i,0}e^{rT}\right)-\frac{1}{2} \sum_{i=1}^{d} \tilde{\omega}_{i} \tilde{C}_{i}^{2} T  \\
		\tilde{v}^{2} &=\sum_{i=1}^{d} \sum_{k=1}^{d} \tilde{\omega}_{i} \tilde{\omega}_{k} \tilde{C}_{i} \tilde{C}_{k} \rho_{i k} T\\ K^{*} &= K-(\mathrm{E}(A_{basket}(T))-\mathrm{E}(A_{geometric}(T)))
	\end{aligned}
\end{equation}
$\Phi$ is the cumulative distribution of standard normal distribution.

%%%%%%%%%%%%%%%%%SOME FIGURES TO BE ADDED
%%%%%%%%%%%%%%%%%%

\subsection{Numerical Example: Three Stocks Basket Option Valuation}

Parameters under consideration:
($\alpha_{ij}/\beta_{ij}$) =\\
$[  0.5933\quad  0.2068\quad  0.1743;\quad
0.0845\quad  0.6746\quad 0.1222;\quad
0.0312\quad 0.2033\quad 0.3820 ];$\\
 
$\lambda = [0.0545,\quad 0.0593,\quad 0.0623]';$        (intensity background)\\
Markov Chain parameters:\\
$a = diag([0.03,0.04,0.02]);\quad \Sigma = diag([0.05,0.03,0.06]);$\\
$A_0 = [30, 20, 25];\quad  (stock prices at time 0)$\\
$r = 0.06;$\quad                   (interest rate)\\
$\Omega = [0.3, 0.5, 0.2];$\quad  (weight for basket stocks).
 \begin{figure}[H]
	\centering
	\includegraphics[width=10cm]{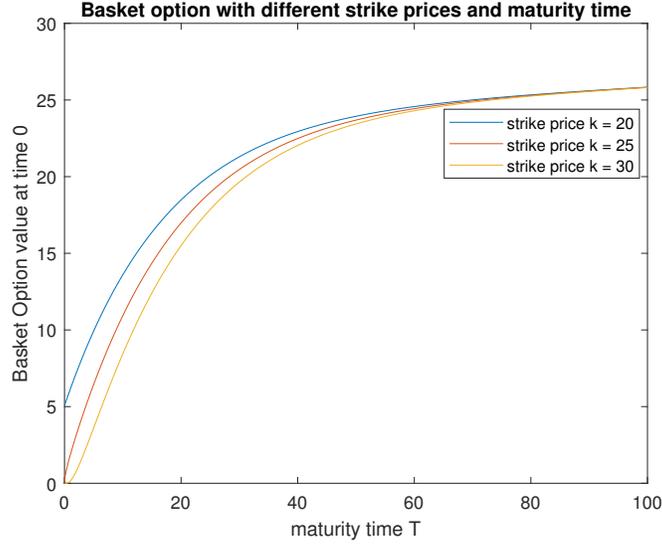}
	\caption{Basket Option Value}
	\label{option_bask1}
\end{figure}

%%%%%%%%%%%%%%
%%%%%%%%%%%%%%%
%%%%%%%%%%%%%%%%%%%%

\section{Hedging with EMGCHP}

In this section, we calculate minimal hedging portfolio and capital for the EMGCHP model. 
(See \cite{HP}, \cite{HK}, \cite{BS}).

The minimal hedge portfolio is $\pi_t=(\alpha_t,\beta_t),$ where
$$
\alpha_t=N \left(    \frac{\ln(S_t/K)+(T-t)(r+\sigma^2/2)}{\sigma\sqrt{T-t}}\right)
$$
and 
$$
\beta_t=-\frac{K}{e^{rT}}N\left(\frac{\ln(S_t/K)+(T-t)(r-\sigma^2/2)}{\sigma\sqrt{T-t}}\right).
$$
The capital is:
$$
X_t=S_tN\left(\frac{\ln(S_t/K)+(T-t)(r+\sigma^2/2)}{\sigma\sqrt{T-t}})-Ke^{-r(T-t)}N(\frac{\ln(S_t/K)+(T-t)(r-\sigma^2/2)}{\sigma\sqrt{T-t}}\right)
$$

Below are some graphs for the dynamics of $\alpha_,\beta_t, X_t$ and $S_t.$

\begin{figure}[H]
	\centering
	\includegraphics[width=10cm]{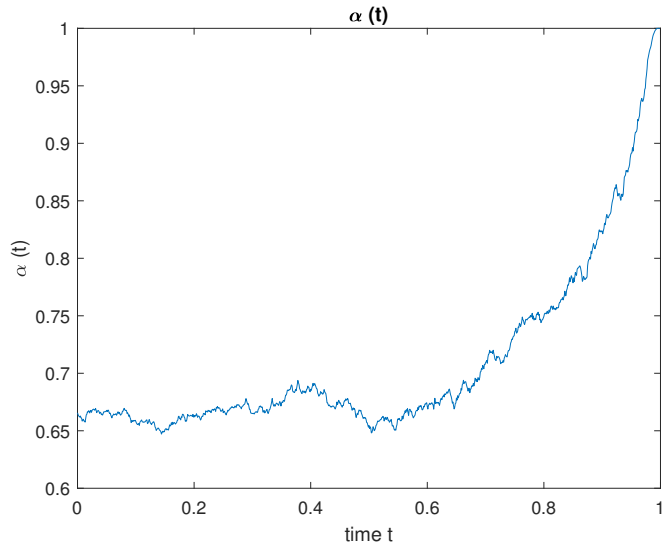}
	\caption{ Dynamics for $\alpha_t$}
	\label{option_h1}
\end{figure}

\begin{figure}[H]
	\centering
	\includegraphics[width=10cm]{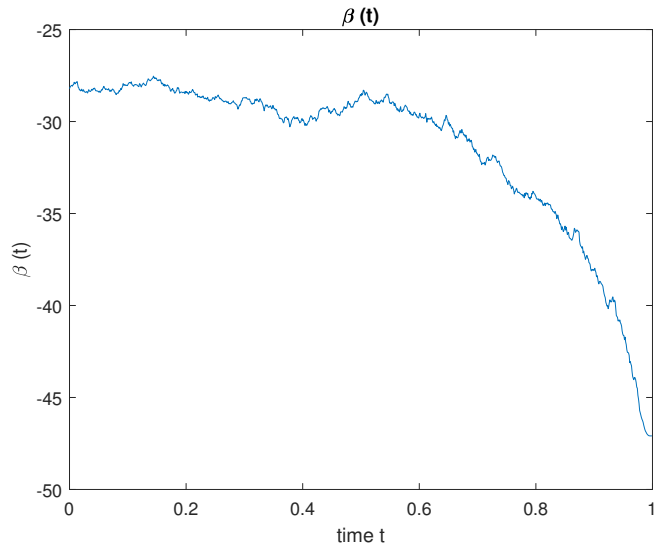}
	\caption{ Dynamics for $\beta_t$}
	\label{option_h2}
\end{figure}

\begin{figure}[H]
	\centering
	\includegraphics[width=10cm]{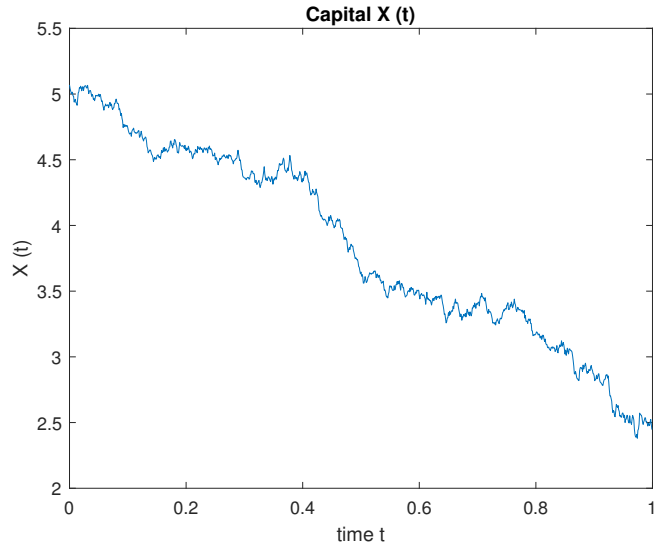}
	\caption{ Dynamics for capital $X_t$}
	\label{option_h3}
\end{figure}

\begin{figure}[H]
	\centering
	\includegraphics[width=10cm]{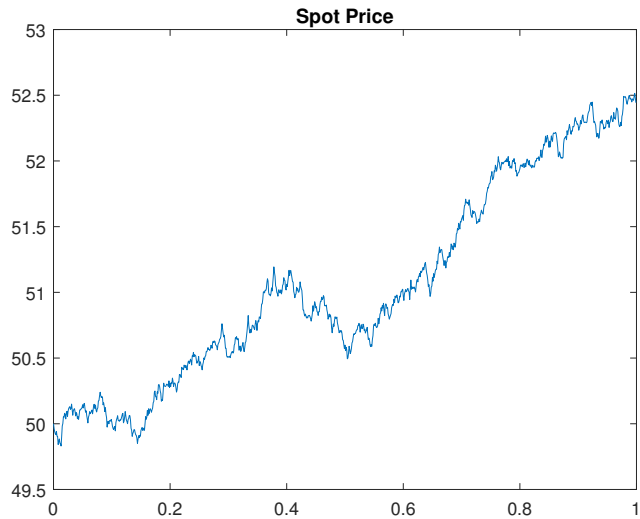}
	\caption{ Evolution of the spot price}
	\label{option_h4}
\end{figure}

%%%%%%%%%%%%
%%%%%%%%%%%%%%

\section{Conclusion} 

In this paper, we considered pricing of European options and spread options for Hawkes-based model for the limit order book. We introduced multivariate Hawkes process and the multivariable general compound Hawkes process. Exponential multivariate general compound Hawkes processes and limit theorems for them, namely, LLN and FCLT, have been considered then. We also considered a special case of one-dimensional EMGCHP and its limit theorems. Option pricing with $1D$ EGCHP in LOB, hedging strategies, and numerical example have been presented. We also introduced greeks calculations for those models. Margrabe's spread options valuations with Hawkes-based models for two assets and numerical example were presented. Also, Margrabe's spread option pricing with two $2D$ EMGCHP and numerical example have been included. Basket options valuations with numerical example are included. We finally discussed the implied volatility and implied order flow. It reveals the relationship between stock volatility and the order flow in the limit order book system. In this way, the Hawkes-based model can provide more market forecast information than the classical Black-Scholes model.

%%%%%%%%%%%%%%%%%%%
\section{Acknowledgements:} 
All authors wish to thank NSERC for continuing support.

%%%%%%%%%
%%%%%%%%%%%

\end{document}